\DeclareMathOperator{\Hom}{Hom}
\DeclareMathOperator{\Tri}{Tri}
\begin{document}


\setcounter{page}{285}
\publyear{24}
\papernumber{2183}
\volume{191}
\issue{3-4}

\finalVersionForARXIV


\title{Commuting Upper Triangular Binary Morphisms}

\author{Juha Honkala\thanks{Address for correspondence: Department of Mathematics and Statistics,
                        University of Turku, FI-20014 Turku, Finland. }
\\
Department of Mathematics and Statistics\\
University of Turku\\
FI-20014 Turku, Finland\\
juha.honkala@utu.fi
}

\date{}

\maketitle

\runninghead{J. Honkala}{Commuting Upper Triangular Binary Morphisms}

\begin{abstract}
A morphism $g$ from the free monoid $X^*$ into itself is called upper triangular if the matrix of $g$ is upper triangular.
We characterize all upper triangular binary morphisms $g_1$ and $g_2$ such that $g_1g_2=g_2g_1$.
\end{abstract}

\begin{keywords}
Free monoid morphism; Commutativity; Combinatorics on morphisms
\end{keywords}

\section{Introduction}
The free monoid morphisms play an important role in many areas of mathematics and theoretical computer science (see \cite{AS,Lo1,Lo2,RS1,RS2,S}). On the other hand, many questions concerning combinatorics
on morphisms appear to be rather difficult. It is instructive to consider the problem of commutativity. If $u$ and $v$ are words, the equation $uv=vu$ holds if and only if there is a word $w$ such that $u$ and $v$ are powers of $w$ (see \cite{Lo1}). For free monoid morphisms the situation is more complicated. For two morphisms $g_1$ and $g_2$, the equation $g_1g_2=g_2g_1$ does not imply that $g_1$ and $g_2$ are powers of a third morphism (see, however, \cite{RW}).

In this paper we study commuting upper triangular binary morphisms. Let $X=\{a,b\}$ be a binary alphabet. A morphism $g$ from the free monoid $X^*$ into itself is called upper triangular if the matrix of $g$ is upper triangular. If $a$ is the first letter of $X$, this means that there is a nonnegative integer $s$ such that $g(a)=a^s$. We will characterize all upper triangular binary morphisms $g_1$ and $g_2$ such that $g_1g_2=g_2g_1$.

We now outline the contents of this paper. In Section 2 we recall the basic definitions. In Section 3 we discuss the connections between freeness and commutativity. In Section 4 we give examples of commuting morphisms. In Section 5 we study infinite words generated by morphisms. While the morphisms we study are not uniform, it turns out to be possible to use results concerning automatic sequences. In particular, we will apply the
theorem of Cobham characterizing those sequences which are automatic in two multiplicatively independent bases (see \cite{AS}).

In Sections 6,7,8 and 9 we characterize all upper triangular binary morphisms $g_1$ and $g_2$ such that $g_1g_2=g_2g_1$. Assume that $a$ is the first letter and $b$ is the second letter of the binary alphabet. In Section 6 we consider nonsingular morphisms such that both $g_1(b)$ and $g_2(b)$ contain at least two occurrences of $b$. We have two cases depending on whether these numbers are multiplicatively independent or not. The remaining cases are easier and are discussed in Sections 7,8 and 9.

We assume that the reader is familiar with the basics of free monoid morphisms, infinite words, automatic sequences and combinatorics on words (see \cite{AS,Lo1,Lo2,RS1,RS2,S}).
For previous results concerning combinatorics on morphisms see, e.g., \cite{Ho1,Ho2,Ho3,Ho4,RW}.

\section{Definitions}
We use standard notation and terminology concerning free monoids and their morphisms (see \cite{AS,Lo1,Lo2,RS1,RS2}).
If $X$ is a finite nonempty set, $X^*$ is the {\em free monoid} generated by $X$. The identity element of $X^*$ is the {\em empty word} denoted by $\varepsilon$.
If $u,v,w$ are words such that $uv=w$, we denote $v=u^{-1}w$.

If $w$ is a word and $a$ is a letter, then $|w|_a$ is the number of occurrences of $a$ in $w$.
The {\em length} of a word $w$, denoted by $|w|$, is the total number of letters in $w$.

\medskip
Let $X$ and $Y$ be finite nonempty alphabets.
A mapping $h:X^*\to Y^*$ is a {\em morphism} if
$$h(uv)=h(u)h(v)$$
for all $u,v\in X^*$.
The set of all morphisms from $X^*$ to $X^*$ is denoted by $\Hom(X^*)$. $\Hom(X^*)$ is a monoid with
respect to the usual product of morphisms.

\medskip
If $h\in \Hom(X^*)$ and the letters of $X$ are $x_1,\ldots,x_d$ in a fixed order, then the {\em  matrix} $M_h$ of $h$ is defined by
$$M_h=\left(\begin{array}{cccc}
                   |h(x_1)|_{x_1} & |h(x_2)|_{x_1}&\ldots &|h(x_d)|_{x_1}\\
                   |h(x_1)|_{x_2} & |h(x_2)|_{x_2}&\ldots &|h(x_d)|_{x_2}\\
                   \vdots&\vdots&&\vdots\\
                   |h(x_1)|_{x_d} & |h(x_2)|_{x_d}&\ldots &|h(x_d)|_{x_d}\\
                \end{array}  \right) .$$

A morphism $h\in \Hom(X^*)$ is {\em upper triangular} if its matrix $M_h$ is upper triangular. The set of upper triangular morphisms from $X^*$ to $X^*$ is denoted by $\Tri(X^*)$.
A morphism $h\in \Hom(X^*)$ is {\em nonsingular} if its matrix is nonsingular.

Let now $X$ be a finite alphabet and let
$h\in\Hom(X^*)$. If $w\in X^*$ is a word such that $w$ is a prefix of $h(w)$ and $\lim_{n\rightarrow \infty} |h^n(w)|=\infty$, we say that $h$ is {\em prolongable} on $w$ and define the infinite word $h^{\omega}(w)$ by
$$h^{\omega}(w)=\lim_{n\rightarrow  \infty}h^n(w).$$
Hence, $h^{\omega}(w)$ is the unique infinite word $u$ such that $h^n(w)$ is a prefix of $u$ for all $n\in{\mathbb N}$.

\section{Connections between freeness and commutativity}
A nonempty subset $Y$ of a semigroup $S$ is called  {\em free} if every element of the subsemigroup of $S$ generated by $Y$ can be written uniquely as a product of elements of $Y$. In other words, a set $Y$ is free if for all positive integers $m$ and $n$ and $u_1,\ldots,u_m,v_1,\ldots,v_n\in Y$, the equation
$$u_1u_2\cdots u_m=v_1v_2\cdots v_n$$
implies that
$$m=n \hspace{3mm} \mbox{and} \hspace{3mm} u_i=v_i \hspace{3mm} \mbox{for} \hspace{3mm} i=1,\ldots,m.$$

For an excellent introduction to freeness problems over semigroups we refer to \cite{CN}.

If a set contains two elements which commute, then the set is not free.
If $u,v\in X^*$ and $u\neq v$, then $\{u,v\}$ is free if and only if $u$ and $v$ do not commute (see \cite{Lo1}).

We recall some related results for upper triangular morphisms.

\medskip
First, let $X=\{a,b\}$ be a binary alphabet. Let $g_1,g_2\in \Tri(X^*)$.
We say that $\{g_1,g_2\}$ is a {\em special pair} if $g_1(b)$ and $g_2(b)$ belong to $a^*ba^*$ and exactly one of $g_1(a)$ and $g_2(a)$ equals $a$.

The following result is from \cite{Ho4}.

\begin{theorem}\label{the1}
Let $X=\{a,b\}$ and let $g_1,g_2\in \Tri(X^*)$ be nonsingular upper triangular morphisms. Assume that $g_1\neq g_2$. Assume that $\{g_1,g_2\}$ is not a special pair. If  $\{g_1,g_2\}$ is not free, then $g_1g_2=g_2g_1$.
\end{theorem}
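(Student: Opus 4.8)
The plan is to convert non-freeness into a single morphism equation and then analyze that equation by comparing two factorizations of a common image word.

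First I would record a cancellation property. A nonsingular upper triangular binary morphism $g$ satisfies $g(a)=a^p$ with $p\geq 1$ and $|g(b)|_b\geq 1$, so $g(a)$ is a power of $a$ while $g(b)$ is not; hence $g(a)$ and $g(b)$ do not commute, and by the quoted characterization $\{g(a),g(b)\}$ is free, i.e.\ a code. Thus $g$ is injective and therefore left-cancellable in $\Hom(X^*)$: if $g\gamma=g\delta$ then $\gamma=\delta$. Now, since $\{g_1,g_2\}$ is not free, choose a nontrivial relation $u_1\cdots u_m=v_1\cdots v_n$ with all $u_i,v_j\in\{g_1,g_2\}$ and $m+n$ minimal. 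Minimality and left-cancellation force $u_1\neq v_1$, so after interchanging the indices if necessary we may assume $u_1=g_1$ and $v_1=g_2$. Setting $\alpha=u_2\cdots u_m$ and $\beta=v_2\cdots v_n$ (products of $g_1,g_2$, possibly empty), the relation becomes the single equation $g_1\alpha=g_2\beta$; here $g_1\neq g_2$ prevents both $\alpha,\beta$ from being empty, and the degenerate possibility that a nonempty product reduces to the identity would force one of $g_1,g_2$ to be the identity, in which case the two commute trivially, so I may assume it does not occur. The objective is to deduce from $g_1\alpha=g_2\beta$ that $g_1g_2=g_2g_1$, which by upper-triangularity is equivalent to the single word identity $g_1(g_2(b))=g_2(g_1(b))$, since the images of $a$ always agree.

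Next I would pass to a word equation. Evaluating $g_1\alpha=g_2\beta$ at $a$ gives the numerical relation $p_1\,|\alpha(a)|=p_2\,|\beta(a)|$, tying the exponents $p_1,p_2$ together. Evaluating at $b$ gives $g_1(u)=g_2(v)$ with $u=\alpha(b)$ and $v=\beta(b)$. Writing $u$ in gap form $u=a^{c_0}ba^{c_1}\cdots ba^{c_k}$ and abbreviating $w_1=g_1(b)$, we obtain $g_1(u)=a^{p_1c_0}\,w_1\,a^{p_1c_1}\,w_1\cdots w_1\,a^{p_1c_k}$, and symmetrically $g_2(v)=a^{p_2d_0}\,w_2\,a^{p_2d_1}\cdots w_2\,a^{p_2d_l}$ with $w_2=g_2(b)$. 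Comparing numbers of $b$'s yields $k\,|w_1|_b=l\,|w_2|_b$. Thus the common word $H:=g_1(u)=g_2(v)$ carries two factorizations into copies of $w_1$, respectively $w_2$, separated by blocks of $a$'s, and the two factorizations must respect the intrinsic positions of the $b$'s of $H$.

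The core of the argument is then a synchronization analysis of these two factorizations of $H$. Aligning the occurrences of $w_1$ with those of $w_2$ along the shared $b$-positions, and using the numerical relation between $p_1$ and $p_2$, one forces strong compatibility between $w_1$, $w_2$, $p_1$ and $p_2$, from which $g_1(w_2)=g_2(w_1)$, that is, commutativity, should follow. I expect the main obstacle to lie exactly here, because the intervening $a$-blocks have different lengths on the two sides ($p_1c_i$ versus $p_2d_j$), so this is not a clean periodicity statement and requires a case analysis: one splits according to the first letters of $g_1(b)$ and $g_2(b)$ (each beginning with $a$ or with $b$) and according to whether $p_1$ and $p_2$ are multiplicatively dependent. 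The delicate configuration is precisely that of a special pair --- both $g_1(b),g_2(b)\in a^*ba^*$ and exactly one of $g_1(a),g_2(a)$ equal to $a$ --- for there the discrepancy between $p_1c_i$ and $p_2d_j$ can be absorbed by the surrounding $a$-blocks, the alignment fails to pin down the morphisms, and commutativity need not follow; this is exactly the case the hypothesis removes. Isolating this configuration and showing that in every remaining case the alignment yields $g_1(g_2(b))=g_2(g_1(b))$ is where the bulk of the work lies, and completing it establishes $g_1g_2=g_2g_1$.
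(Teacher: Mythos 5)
You should first note that the paper contains no proof of Theorem~\ref{the1} to compare against: it is quoted from reference \cite{Ho4} (a separately submitted paper), so your argument has to stand entirely on its own. Your preliminary reductions are sound as far as they go: a nonsingular $g\in\Tri(X^*)$ maps $a$ to a nonempty power of $a$ and $b$ to a word containing $b$, so $g(a)$ and $g(b)$ do not commute, $\{g(a),g(b)\}$ is a code, and $g$ is left-cancellable; a minimal nontrivial relation then collapses to a single equation $g_1\alpha=g_2\beta$ with $\alpha,\beta$ products of $g_1,g_2$; and commutativity is equivalent to the single word identity $g_1(g_2(b))=g_2(g_1(b))$. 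All of this is routine bookkeeping.

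The genuine gap is that the entire content of the theorem lies in the one step you do not carry out. You reduce to the word equation $g_1(\alpha(b))=g_2(\beta(b))$, announce that ``the core of the argument is a synchronization analysis'' of the two factorizations, concede that ``the main obstacle'' and ``the bulk of the work'' lie exactly there, and then stop. Nothing in the sketch shows how aligning the occurrences of $w_1=g_1(b)$ and $w_2=g_2(b)$ inside the common word actually forces $g_1(g_2(b))=g_2(g_1(b))$: the factors $\alpha$ and $\beta$ are unknown products of arbitrary length, the intervening blocks $a^{p_1c_i}$ and $a^{p_2d_j}$ need not correspond term by term, and the case analysis you gesture at (first letters of $g_1(b)$ and $g_2(b)$, multiplicative dependence of $p_1$ and $p_2$) is named but never executed. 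In particular, you never exhibit where the hypothesis that $\{g_1,g_2\}$ is not a special pair enters the deduction; you only assert, after the fact, that the special-pair configuration is where the alignment ``fails to pin down the morphisms.'' A correct proof must isolate that failure mode explicitly and show that every other configuration forces commutativity; the machinery this paper itself deploys for closely related synchronization questions (the infinite words $\omega(h)$, the gap sequences $A_w(i)$ of Lemma~\ref{le1}, Cobham's theorem) indicates that this missing step is substantial rather than a formality. As written, the proposal is a reduction plus a statement of the remaining problem, not a proof.
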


For larger alphabets we have the following result (see \cite{Ho3}).

\begin{theorem}\label{the2}
Let $X$ be an arbitrary alphabet.
Let $g_1,g_2\in \Tri(X^*)$ and let $M_i$ be the matrix of $g_i$ for $i=1,2$. Assume $g_1\neq g_2$. Assume that all diagonal entries of
$M_i$ are at least two for $i=1,2$. If $\{g_1,g_2\}$ is not free, then $g_1g_2=g_2g_1$.
\end{theorem}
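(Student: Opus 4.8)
The plan is to induct on the size $d=|X|$ of the alphabet, with letters $x_1,\ldots,x_d$. When $d=1$ every morphism $x_1\mapsto x_1^{r}$ commutes with every other, so there is nothing to prove. When $d=2$ the hypotheses say precisely that $g_1\neq g_2$ are nonsingular binary morphisms forming a non-free set, and since all diagonal entries are at least two the pair cannot be special (a special pair requires some $g_i(a)=a$, i.e.\ a diagonal entry equal to $1$). Hence the binary case is exactly Theorem~\ref{the1}, which gives $g_1g_2=g_2g_1$.

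For the inductive step I would pass to two smaller alphabets. Since each $M_i$ is upper triangular, $g_i$ maps $Y^*$ into itself for $Y=\{x_1,\ldots,x_{d-1}\}$; let $g_i'$ be the restriction. Dually, let $\pi\colon X^*\to Z^*$ erase $x_1$, where $Z=\{x_2,\ldots,x_d\}$; because $g_i(x_1)=x_1^{r_i}$ one checks $\pi g_i=g_i''\pi$ for the morphism $g_i''\in\Tri(Z^*)$ given by $g_i''(x_k)=\pi(g_i(x_k))$. All of $g_1',g_2',g_1'',g_2''$ again have diagonal entries at least two. Starting from a relation $U=V$ witnessing non-freeness, with $U=g_{i_1}\cdots g_{i_m}$ and $V=g_{j_1}\cdots g_{j_n}$ distinct words over $\{g_1,g_2\}$, restriction to $Y^*$ yields $g_{i_1}'\cdots g_{i_m}'=g_{j_1}'\cdots g_{j_n}'$, while applying $\pi$ and cancelling it on the right (it is surjective) yields $g_{i_1}''\cdots g_{i_m}''=g_{j_1}''\cdots g_{j_n}''$. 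Thus whenever $g_1'\neq g_2'$ (resp.\ $g_1''\neq g_2''$) the restricted relation genuinely witnesses non-freeness, the induction hypothesis applies, and we obtain $g_1'g_2'=g_2'g_1'$ (resp.\ $g_1''g_2''=g_2''g_1''$).

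These two commutations force $g_1g_2$ and $g_2g_1$ to agree on $Y^*$ and to have the same image under $\pi$ on $x_d$; equivalently, $M_1M_2$ and $M_2M_1$ agree in every entry except possibly the corner $(1,d)$, and $g_1g_2(x_d)$, $g_2g_1(x_d)$ agree after all letters $x_1$ are erased. Two things therefore remain: the scalar identity $(M_1M_2)_{1d}=(M_2M_1)_{1d}$, which fixes the total number of letters $x_1$, and the exact placement of these $x_1$. For the counting I would compare the diagonal entries of $M_U=M_V$: writing $r_k=(M_1)_{kk}$, $s_k=(M_2)_{kk}$ and letting $\alpha,\beta$ and $\gamma,\delta$ count the occurrences of $g_1,g_2$ in $U$ and $V$, the $(k,k)$ entries give
$$r_k^{\alpha-\gamma}=s_k^{\delta-\beta}\qquad(1\le k\le d).$$
If some pair $(r_k,s_k)$ is multiplicatively independent, this forces $\alpha=\gamma$ and $\beta=\delta$, so $U$ and $V$ are anagrams over $\{g_1,g_2\}$ of equal length; the corner identity is then accessible from the $(1,d)$ entries of $M_U=M_V$ together with the two block commutations.

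The main obstacle is the case in which every pair $(r_k,s_k)$ is multiplicatively dependent. Then $U$ and $V$ may have different lengths, the occurrence counts are not pinned down by the diagonals, and, more seriously, even once the corner identity is secured the placement of the letters $x_1$ inside $g_1g_2(x_d)$ and $g_2g_1(x_d)$ is not determined by the $\pi$-images and the letter counts alone, since a word is not in general recovered from its projection and its Parikh vector. Closing this gap is exactly the point at which morphisms fail to behave like words, and I would attack it by returning to a shortest witnessing relation and arguing directly in the free monoid: comparing the two products prefix by prefix and using the equality of the $\pi$-images to transport the matching across each maximal block of letters $x_1$, while the multiplicative dependence is used to control the relative growth of the two products. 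Finally, the degenerate situations $g_1'=g_2'$ or $g_1''=g_2''$, where the induction hypothesis is unavailable, would be handled separately: there $g_1$ and $g_2$ differ only in a single layer, so the desired identity reduces to a short explicit word computation, again using that the $(d,d)$ entries are at least two.
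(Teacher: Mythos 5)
First, a remark on the comparison: the paper does not actually prove Theorem \ref{the2} --- it is quoted from \cite{Ho3} --- so there is no internal proof to measure your argument against, and your proposal has to stand on its own. Its skeleton is sound as far as it goes: the base case $d=2$ does follow from Theorem \ref{the1} (diagonal entries at least two rule out both singularity and special pairs), the intertwining $\pi g_i=g_i''\pi$ and the restrictions to $Y^*$ are correct, and a non-freeness relation $U=V$ does descend to both smaller alphabets; moreover, in the degenerate cases $g_1'=g_2'$ or $g_1''=g_2''$ you do not even need the induction hypothesis, since equal morphisms commute trivially. So you correctly reduce everything to the single identity $g_1g_2(x_d)=g_2g_1(x_d)$, already knowing that the two sides agree after erasing $x_1$ and that they agree on $x_1,\ldots,x_{d-1}$.

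But that last identity is essentially the whole content of the theorem, and the proposal does not prove it. You flag the two failure points yourself: (a) when the pairs $(r_k,s_k)$ of diagonal entries are multiplicatively dependent, the equations $r_k^{\alpha-\gamma}=s_k^{\delta-\beta}$ do not pin down the occurrence counts of $g_1,g_2$ in $U$ and $V$, so not even the scalar corner identity $(M_1M_2)_{1d}=(M_2M_1)_{1d}$ is secured; and (b) even granting equal $x_1$-counts, the positions of the $x_1$'s inside $g_1g_2(x_d)$ and $g_2g_1(x_d)$ are not determined by the $\pi$-image together with the Parikh vector. The closing sentence about ``comparing the two products prefix by prefix'' and ``transporting the matching across each maximal block of letters $x_1$'' is a statement of intent, not an argument: it does not specify what is matched to what, why passing to a shortest relation helps, or how multiplicative dependence controls anything. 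This is precisely where the difficulty lives --- note that the machinery the present paper imports from \cite{Ho3} (the infinite words $\omega(h)$ and the gap sequences $A_w(i)$ of Section 5) exists exactly to handle this kind of positional information --- so as written the proposal is an honest reduction with the crux left open, not a proof.
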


Theorems \ref{the1} and \ref{the2} imply the following lemma.

\begin{lemma}\label{comm}
Assume that the morphisms $g_1$ and $g_2$ satisfy the assumptions of Theorem \ref{the1} or Theorem \ref{the2}. Let $m$ and $n$ be positive integers. If $g_1^m$ and $g_2^n$ commute, then $g_1$ and $g_2$ commute.
\end{lemma}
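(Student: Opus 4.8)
The plan is to reduce the statement to the equivalence between freeness and commutativity furnished by Theorems \ref{the1} and \ref{the2}. Under the hypotheses of either theorem we are already given $g_1\neq g_2$, and the recalled fact that a set containing two commuting elements cannot be free shows that the desired conclusion $g_1g_2=g_2g_1$ will follow as soon as we know that $\{g_1,g_2\}$ is not free. Thus it suffices to prove that the commutation of the powers $g_1^m$ and $g_2^n$ already forces $\{g_1,g_2\}$ to fail to be free; Theorem \ref{the1} or Theorem \ref{the2} will then deliver $g_1g_2=g_2g_1$ directly.

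First I would argue by contradiction and suppose that $\{g_1,g_2\}$ is free. The hypothesis $g_1^mg_2^n=g_2^ng_1^m$ exhibits one and the same element of $\Hom(X^*)$ as two products of elements of $\{g_1,g_2\}$: on the left the factors are $g_1$ repeated $m$ times followed by $g_2$ repeated $n$ times, and on the right they are $g_2$ repeated $n$ times followed by $g_1$ repeated $m$ times. Since $m,n\geq 1$, both products are nonempty and each has exactly $m+n$ factors, so freeness forces the corresponding factors to coincide term by term. Comparing the first factors gives $g_1=g_2$, contradicting the standing assumption $g_1\neq g_2$. Hence $\{g_1,g_2\}$ is not free, and applying whichever of Theorem \ref{the1} or Theorem \ref{the2} matches the hypotheses of $g_1,g_2$ yields $g_1g_2=g_2g_1$.

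I expect the only points requiring care to be bookkeeping rather than substance: one must ensure that $m,n\geq 1$ so that both sides of $g_1^mg_2^n=g_2^ng_1^m$ are genuine nonempty products with a well-defined first factor, and that the hypotheses invoked are precisely those of whichever theorem applies (in particular that $g_1\neq g_2$ is part of those hypotheses, so no separate degenerate case arises). There is no real combinatorial obstacle, since all the difficult work is hidden inside Theorems \ref{the1} and \ref{the2}; the lemma is in essence the observation that replacing each $g_i$ by a power $g_i^{k}$ cannot manufacture commutativity out of a free pair.
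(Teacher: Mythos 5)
Your proposal is correct and follows essentially the same route as the paper: the relation $g_1^mg_2^n=g_2^ng_1^m$ shows that $\{g_1,g_2\}$ is not free (the paper asserts this in one line; you spell out the first-factor comparison, which is the right justification given $g_1\neq g_2$), and then Theorem \ref{the1} or Theorem \ref{the2} yields $g_1g_2=g_2g_1$.
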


 \begin{proof}
 Assume that $g_1^mg_2^n=g_2^ng_1^m$. Then the pair $\{g_1,g_2\}$ is not free and the claim follows by Theorem \ref{the1}
 or by Theorem \ref{the2}.
\end{proof}

\section{Examples}
In this section we give examples of commuting morphisms. The morphisms considered in Example~\ref{ex1} can be regarded as direct sums of unary morphisms.

\begin{example}\label{ex1}
{\rm  Let $X=\{x_1,\ldots,x_k\}$ be an alphabet having $k$ letters. Let
$(m_1,\ldots,m_k)$ and $(n_1,\ldots$, $n_k)$ be $k$-tuples of nonnegative integers. Define
the morphisms $g_1,g_2\in \Tri(X^*)$ by
$$g_1(x_i)=x_i^{m_i} \hspace{3mm} \mbox{ and } \hspace{3mm} g_2(x_i)=x_i^{n_i}$$
for $i=1,\ldots,k$. Then
$$g_1g_2(x_i)=g_1(x_i^{n_i})=x_i^{m_in_i}$$
and
$$g_2g_1(x_i)=g_2(x_i^{m_i})=x_i^{m_in_i}$$
for $i=1,\ldots,k$. Hence
$$g_1g_2=g_2g_1.$$}
\end{example}

\begin{example}\label{ex2}
{\rm
Let $X=\{a,b\}$ and define the morphisms $g_1,g_2\in \Tri(X^*)$ by
$$g_1(a)=a, \hspace{3mm} g_1(b)=b^2$$
and
$$g_2(a)=a^2, \hspace{3mm} g_2(b)=b.$$
By Example \ref{ex1} the morphisms $g_1$ and $g_2$ commute. However, there do not exist positive integers $m$, $n$ and a morphism $g\in \Hom(X^*)$ such that
$g_1=g^m$ and $g_2=g^n$. To see this, assume that such $g$, $m$ and $n$ exist. Then neither $g(a)$ nor $g(b)$ is the empty word. Furthermore, either $|g(a)|=1$
or $|g(b)|=1$ but not both. Without loss of generality assume that $|g(a)|=1$. Then $g(a)=a$ or $g(a)=b$. The first alternative is not possible since $g^n(a)=a^2$.
The second alternative is not possible since it would imply that the only word of length one in $g(X^*)$ is $b$.}
\end{example}

\begin{example}\label{ex3}
{\rm
Let $X=\{a,b\}$ and let $p$ and $q$ be positive integers. Let $\alpha$ be a nonnegative integer. Define the morphisms $g_1,g_2\in \Tri(X^*)$ by
$g_1(a)=g_2(a)=a$, $g_1(b)=(ba^{\alpha})^{p-1}b$, $g_2(b)=(ba^{\alpha})^{q-1}b$.

To prove the equation $g_1g_2=g_2g_1$, let $z=ba^{\alpha}$. Then $g_1(z)=z^p$ and $g_2(z)=z^q$. Hence $g_1g_2(z)=g_2g_1(z)$. Therefore $g_1g_2(b)a^{\alpha}=g_2g_1(b)a^{\alpha}$, which implies that $g_1g_2(b)=g_2g_1(b)$. Trivially $g_1g_2(a)=g_2g_1(a)$.}
\end{example}

\begin{example}\label{ex4}
{\rm Let $X=\{a,b\}$ and let $g_1,g_2\in \Tri(X^*)$ be nonsingular upper triangular binary morphisms. Assume that there exist positive integers $m$ and $n$ such that
$g_1^m=g_2^n$. Assume $g_1\neq g_2$. Then $\{g_1,g_2\}$ is not a special pair. Indeed, if one of $g_1(a)$ and $g_2(a)$ equals $a$, then both do. Hence Theorem~\ref{the1} implies that $g_1g_2=g_2g_1$.}
\end{example}

Let $u$ and $v$ be words over the binary alphabet $X=\{a,b\}$. We say that $u$ and $v$ are {\em $a$-conjugates} if there exist nonnegative integers $p,q,r,s$ and a word $w$ such that
$$u=a^pwa^q, \hspace{3mm} v=a^rwa^s \hspace{3mm} \mbox{and} \hspace{3mm} p+q=r+s.$$
\eject

\begin{example}\label{ex5}
{\rm Let $X=\{a,b\}$ and let $g_1,g_2\in \Tri(X^*)$ be nonsingular upper triangular morphisms. Assume that $g_1(a)=g_2(a)=a$. Assume that there are positive integers $m$ and $n$ such that $g_1^n(b)$ and $g_2^m(b)$ are $a$-conjugates. We show that these conditions imply that $g_1$ and $g_2$ commute. By Lemma \ref{comm} it is enough to show that $g_1^n$ and $g_2^m$ commute.

\medskip
By assumption, there exist nonnegative integers $\gamma_1,\gamma_2,\delta_1,\delta_2,\alpha_1,\ldots,\alpha_{p-1}$ such that
$g_1^n(b)=a^{\gamma_1}za^{\gamma_2}$ and $g_2^m(b)=a^{\delta_1}za^{\delta_2}$, where
$z=ba^{\alpha_1}ba^{\alpha_2}b\cdots ba^{\alpha_{p-1}}b$ and $\gamma_1+\gamma_2=\delta_1+\delta_2$. Then
\begin{eqnarray*}
g_1^ng_2^m(b)&=&a^{\delta_1}g_1^n(z)a^{\delta_2}\\
&=&a^{\delta_1+\gamma_1}za^{\gamma_2+\alpha_1+\gamma_1}za^{\gamma_2+\alpha_2+\gamma_1}za^{\gamma_2}\cdots a^{\gamma_1}za^{\gamma_2+\alpha_{p-1}+\gamma_1}za^{\gamma_2+\delta_2}\vspace*{-3mm}
\end{eqnarray*}
and\vspace*{-3mm}
\begin{eqnarray*}
g_2^mg_1^n(b)&=&a^{\gamma_1}g_2^m(z)a^{\gamma_2}\\
&=&a^{\gamma_1+\delta_1}za^{\delta_2+\alpha_1+\delta_1}za^{\delta_2+\alpha_2+\delta_1}za^{\delta_2}\cdots a^{\delta_1}za^{\delta_2+\alpha_{p-1}+\delta_1}za^{\delta_2+\gamma_2}.
\end{eqnarray*}
Therefore $g_1^ng_2^m(b)=g_2^mg_1^n(b)$. Hence $g_1^ng_2^m=g_2^mg_1^n$.}
\end{example}

We conclude this section by two examples involving singular morphisms.

\begin{example}\label{ex6}
{\rm Let $X=\{a,b\}$. Define the morphisms $g_1,g_2\in \Tri(X^*)$ by
$$g_1(a)=g_2(a)=\varepsilon, \hspace{3mm} g_1(b)=w^i, \hspace{3mm} g_2(b)=w^j$$
where $w\in X^*$ and $i$ and $j$ are nonnegative integers. Then
$$g_1g_2(b)=g_1(w^j)=w^{ij|w|_b} \hspace{3mm} \mbox{and} \hspace{3mm} g_2g_1(b)=g_2(w^i)=w^{ij|w|_b}.$$
Hence $g_1g_2=g_2g_1$.}
\end{example}

\begin{example}\label{ex7}
{\rm  Let $X=\{a,b\}$. Define the morphisms $g_1,g_2\in \Tri(X^*)$ by
$$g_1(a)=\varepsilon, \hspace{3mm} g_1(b)=(a^{\alpha}ba^{\beta})^i$$
and
$$g_2(a)=a, \hspace{3mm} g_2(b)=(ba^{\alpha+\beta})^jb$$
where $\alpha,\beta,i,j$ are nonnegative integers. Then
$$g_1g_2(b)=g_1((ba^{\alpha+\beta})^jb)=g_1(b^{j+1})=(a^{\alpha}ba^{\beta})^{i(j+1)}$$
and
$$g_2g_1(b)=g_2((a^{\alpha}ba^{\beta})^i)=(a^{\alpha}(ba^{\alpha+\beta})^jba^{\beta})^i=(a^{\alpha}ba^{\beta})^{i(j+1)}.$$
Hence $g_1g_2=g_2g_1$.}
\end{example}

\section{Properties of infinite words generated by upper triangular binary morphisms}
Let $X=\{a,b\}$ be a binary alphabet. Regard $a$ as the first letter of $X$ and $b$ as the second letter of $X$.

\eject

Let $h\in \Tri(X^*)$. Assume that $h$ is nonsingular. Then there exist a nonnegative integer $\gamma$ and a word $v$ such that $h(b)=a^{\gamma}bv$.
Let $c$ be a new letter and let $Y=X\cup \{c\}$. Regard $c$ as the third letter of $Y$. Define the morphism ${\bf RIGHT}(h) \in \Tri(Y^*)$ by
$$\mbox{{\bf RIGHT}}(h)(x)=h(x), \hspace{3mm} \mbox{ if } x\in X, \hspace{6mm} \mbox{{\bf RIGHT}}(h)(c)=cv.$$
Assume that $v\neq \varepsilon$. Then we define the infinite word $\omega(h)$ by
$$\omega(h)=bc^{-1}{\bf RIGHT}(h)^{\omega}(c).$$
In other words, the infinite word $\omega(h)$ is obtained from ${\bf RIGHT}(h)^{\omega}(c)$ by replacing its first letter $c$ by $b$.
Hence, if $n$ is any positive integer, the word obtained from $h^n(b)$ by deleting all occurrences of $a$ preceding the first occurrence of $b$
is a prefix of $\omega(h)$.

For the proof of the following lemma see \cite{Ho3}.

\begin{lemma}\label{ba}
Let $g_1,g_2\in \Tri(X^*)$ be nonsingular morphisms. Let $h_i={\bf RIGHT}(g_i)$ for $i=1,2$. Assume that $h_i(c)\neq c$ for $i=1,2$. If $g_1g_2=g_2g_1$, then $\omega(g_1)=\omega(g_2)$.
\end{lemma}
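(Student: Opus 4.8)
Write the notation $g_i(a)=a^{s_i}$ and $g_i(b)=a^{\gamma_i}bv_i$; nonsingularity forces $s_i\ge 1$ and $|g_i(b)|_b\ge 1$, while the hypothesis $h_i(c)\ne c$ says exactly that $v_i\ne\varepsilon$, so $|g_i(b)|\ge 2$ and each $g_i$ is non-erasing. Let $\sigma$ denote the operation that deletes from a finite or infinite word its maximal prefix lying in $a^*$. The plan is to reinterpret $\omega(g_i)$ through the single operator $\Phi_i=\sigma\circ g_i$, acting on infinite words over $X$ that begin with $b$, and then to prove that the two operators $\Phi_1$ and $\Phi_2$ commute.

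First I would record the basic properties of $\Phi_i$. The remark following the definition of $\omega(h)$ already gives that $\sigma(g_i^n(b))$ is a prefix of $\omega(g_i)$ for every $n$; since $g_i$ is non-erasing and $|g_i(b)|\ge 2$, the lengths $|\sigma(g_i^n(b))|$ tend to infinity, so $\omega(g_i)=\lim_{n\to\infty}\sigma(g_i^n(b))$. Extending $g_i$ continuously to $X^\omega$ (legitimate because $g_i$ is non-erasing), the key elementary identity is $\sigma\circ g_i\circ\sigma=\sigma\circ g_i$: since $g_i(a)\in a^*$, the leading $a$-block of a word produces only further occurrences of $a$ after $g_i$ is applied, and these are discarded by the outer $\sigma$. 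From this identity one checks that $\Phi_i(\omega(g_i))=\omega(g_i)$ and, more importantly, that $\omega(g_i)$ is the \emph{unique} infinite word beginning with $b$ that is fixed by $\Phi_i$: any such fixed point $w$ satisfies $w=\sigma(g_i^n(w))$ for all $n$, and since $w$ begins with $b$ this word has every $\sigma(g_i^n(b))$ as a prefix, so it is forced to equal $\omega(g_i)$.

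Next comes the heart of the argument. Using the identity $\sigma\circ g_i\circ\sigma=\sigma\circ g_i$ twice,
\[
\Phi_1\Phi_2=\sigma g_1\sigma g_2=\sigma g_1g_2=\sigma g_2g_1=\sigma g_2\sigma g_1=\Phi_2\Phi_1,
\]
where the middle equality is precisely the hypothesis $g_1g_2=g_2g_1$. Thus $\Phi_1$ and $\Phi_2$ commute. Now apply $\Phi_1$ to $\omega(g_2)$: since $\Phi_2(\omega(g_2))=\omega(g_2)$ and $\Phi_1\Phi_2=\Phi_2\Phi_1$, the word $\Phi_1(\omega(g_2))$ is again fixed by $\Phi_2$ and still begins with $b$, so by uniqueness $\Phi_1(\omega(g_2))=\omega(g_2)$. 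Hence $\omega(g_2)$ is an infinite word beginning with $b$ that is fixed by $\Phi_1$, and the uniqueness of the $\Phi_1$-fixed point yields $\omega(g_2)=\omega(g_1)$.

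I expect the only real technical work to lie in the second paragraph: justifying the continuous extension of $g_i$ to infinite words, the length-growth claim $|\sigma(g_i^n(b))|\to\infty$, and the uniqueness of the fixed point beginning with $b$ (which rests on the routine fact that $\sigma(g_i^n(b))$ is a prefix of $\sigma(g_i^n(w))$ whenever $w$ begins with $b$). The commutation step is then immediate, and it is the clean point on which the whole proof turns.
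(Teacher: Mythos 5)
Your proof is correct. Note that the paper does not actually prove Lemma \ref{ba} --- it defers entirely to \cite{Ho3} --- so there is no in-paper argument to compare against; what you have written is a legitimate self-contained substitute. The mechanism is sound: the identity $\sigma\circ g_i\circ\sigma=\sigma\circ g_i$ holds because $g_i(a)\in a^+$ maps a leading $a$-block to a leading $a$-block, which the outer $\sigma$ then discards, and this is exactly what converts the hypothesis $g_1g_2=g_2g_1$ into commutativity of the operators $\Phi_i=\sigma\circ g_i$. The uniqueness of the $\Phi_i$-fixed point beginning with $b$ is where the standing hypotheses are consumed: nonsingularity gives $|g_i^n(b)|_b\ge 1$, so for any $w$ beginning with $b$ the word $\sigma(g_i^n(w))$ has $\sigma(g_i^n(b))$ as a prefix, and $h_i(c)\neq c$ (i.e.\ $v_i\neq\varepsilon$) gives $\sigma(g_i^n(b))=bv_i\,g_i(v_i)\cdots g_i^{n-1}(v_i)$, whose length grows without bound; this also confirms your identification $\omega(g_i)=\lim_n\sigma(g_i^n(b))$, since $\sigma(g_i^n(b))=bc^{-1}{\bf RIGHT}(g_i)^n(c)$. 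The fixed-point/commuting-operator packaging is arguably cleaner than a direct prefix comparison of $g_1^ng_2^n(b)$ with $g_2^ng_1^n(b)$, at the modest cost of justifying the continuous extension of non-erasing morphisms to infinite words; I see no gap.
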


We will now study some properties of the infinite words defined above.

Let $w$ be an infinite word over $X$ having infinitely many occurrences of $b$. For $i\geq 1$, let $A_w(i)$ be the number of occurrences of the letter $a$ in $w$ between the $i$th and the $(i+1)$th occurrences of $b$ in $w$.

The following lemma gives a formula for $A_w(i)$, which will be used repeatedly.

\begin{lemma}\label{le1}
Let $h\in \Tri(X^*)$  be  the morphism defined by
$$h(a)=a^s \hspace{3mm} \mbox{ and } \hspace{3mm} h(b)=a^{\gamma_1}ba^{\alpha_1}ba^{\alpha_2}b\cdots ba^{\alpha_{p-1}}ba^{\gamma_2},$$
where $s\geq1$, $p \geq 2$ and
$\gamma_1,\gamma_2,\alpha_1,\ldots,\alpha_{p-1}\geq 0$. Let $w=\omega(h)$. Then \smallskip

(i) $A_w(i+pn)=\alpha_i$ if $i\in \{1,\ldots,p-1\}$ and $n\geq 0$.

(ii) $A_w(pi)=sA_w(i)+\gamma_1+\gamma_2$ if $i\geq 1$.

(iii) If $m\geq 1$, $k\geq 0$, $d_m,\ldots,d_{m+k}\in \{0,1,\ldots,p-1\}$ and $d_m\neq 0$, then
$$
A_w(d_mp^m+d_{m+1}p^{m+1}+\cdots +d_{m+k}p^{m+k})=\alpha_{d_m}s^m+(\gamma_1+\gamma_2)(1+s+\cdots + s^{m-1}).
$$
\end{lemma}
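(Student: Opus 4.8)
The plan is to reduce everything to a single self-similarity identity for the infinite word $w=\omega(h)$ and then read off the three formulas from the block structure it forces. Write $h(b)=a^{\gamma_1}bv$ with $v=a^{\alpha_1}ba^{\alpha_2}b\cdots ba^{\alpha_{p-1}}ba^{\gamma_2}$. The first step is to show that
$$h(w)=a^{\gamma_1}w.$$
To see this, recall from the paragraph preceding the lemma that for every $n$ the word $h^n(b)$ with its leading block of $a$'s deleted is a prefix of $w$; since $p\ge 2$ the number of occurrences of $b$ in $h^n(b)$ is $p^n$, so these prefixes have lengths tending to infinity and $w$ is their limit. Writing $\pi(u)$ for the operation of deleting the $a$'s preceding the first $b$, one checks $\pi(h^{n+1}(b))=\pi(h(\pi(h^n(b))))$, and since $\pi(h^n(b))$ begins with $b$ the word $h(\pi(h^n(b)))$ begins with $h(b)=a^{\gamma_1}b\cdots$; hence $h(\pi(h^n(b)))=a^{\gamma_1}\pi(h^{n+1}(b))$. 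Letting $n\to\infty$ and using continuity of $h$ on prefixes yields $h(w)=a^{\gamma_1}w$.

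Next I would extract (i) and (ii) by comparing the occurrences of $b$ on the two sides of $h(w)=a^{\gamma_1}w$. Prepending $a^{\gamma_1}$ moves no $b$ relative to its neighbours, so $w$ and $a^{\gamma_1}w$ have identical gap sequences, and it suffices to compute the gaps of $h(w)$. Writing $w=ba^{A_w(1)}ba^{A_w(2)}b\cdots$ and applying $h$ replaces each $b$ by the block $a^{\gamma_1}ba^{\alpha_1}b\cdots ba^{\alpha_{p-1}}ba^{\gamma_2}$ containing $p$ letters $b$, and replaces each gap $a^{A_w(i)}$ by $a^{sA_w(i)}$. Hence the occurrences of $b$ coming from the image of the $i$th $b$ of $w$ are numbered $(i-1)p+1,\ldots,ip$; the gaps strictly inside this block are $\alpha_1,\ldots,\alpha_{p-1}$, which gives (i), while the gap straddling the boundary to the next block collects the trailing $a^{\gamma_2}$, the image $a^{sA_w(i)}$, and the leading $a^{\gamma_1}$, giving $A_w(ip)=sA_w(i)+\gamma_1+\gamma_2$, which is (ii).

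Finally, (iii) follows by iterating. Factoring $p^m$ out of the argument, write $d_mp^m+\cdots+d_{m+k}p^{m+k}=p^mN$ with $N=d_m+d_{m+1}p+\cdots+d_{m+k}p^k\ge 1$. Applying (ii) $m$ times peels off one factor $p$ at a time and produces
$$A_w(p^mN)=s^mA_w(N)+(\gamma_1+\gamma_2)(1+s+\cdots+s^{m-1}).$$
Since $d_m\in\{1,\ldots,p-1\}$ we have $N=d_m+pM$ with $M=d_{m+1}+d_{m+2}p+\cdots\ge 0$, so (i) gives $A_w(N)=\alpha_{d_m}$, and substituting this in yields the claimed closed form.

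I expect the only delicate point to be the bookkeeping in the boundary gap for (ii): one must verify that exactly the three pieces $\gamma_2$, $sA_w(i)$ and $\gamma_1$ contribute to $A_w(ip)$ and that nothing is counted twice at the seams of consecutive $h(b)$-blocks. Establishing the functional equation $h(w)=a^{\gamma_1}w$ cleanly, rather than arguing directly with the growing prefixes $h^n(b)$, is what makes this bookkeeping transparent.
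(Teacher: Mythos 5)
Your proof is correct and follows essentially the same route as the paper: the paper's observation that $w\in a^{-\gamma_1}h(b)\{a,h(b)\}^{\omega}$ is exactly your fixed-point identity $h(w)=a^{\gamma_1}w$, and the block-and-seam bookkeeping for (i) and (ii) and the iteration of (ii) for (iii) coincide with the paper's argument. You merely make the functional equation explicit where the paper leaves it implicit.
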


\begin{proof}
The infinite word $w$ belongs to $a^{-\gamma_1}h(b)\{a,h(b)\}^{\omega}$ and $|h(b)|_b=p$. This implies (i).

To prove (ii), let
$$w=w_1ba^jb\cdots,$$
where $|w_1b|_b=i$ and $j=A_w(i)$. Then
$$w=a^{-\gamma_1}\, h(w_1)\, h(b)\, a^{js}\, h(b)\,  \cdots,$$
where $|h(w_1)h(b)|_b=p|w_1b|_b=pi$. Hence
$$A_w(pi)=\gamma_2+js+\gamma_1=sA_w(i)+\gamma_1+\gamma_2.$$
This proves (ii).

\eject

If $m=1$, (iii) is a consequence of (i) and (ii). Assume inductively that (iii) holds for $m\geq 1$. Assume that $k\geq 0$, $e_{m+1},\ldots,e_{m+1+k}\in \{0,1,\ldots,p-1\}$ and $e_{m+1}\neq 0$. Then
$$A_w(e_{m+1}p^{m+1}+e_{m+2}p^{m+2}+\cdots + e_{m+k+1}p^{m+k+1})$$
$$=sA_w(e_{m+1}p^m+e_{m+2}p^{m+1}+\cdots+e_{m+k+1}p^{m+k})+\gamma_1+\gamma_2$$
$$=s(\alpha_{e_{m+1}}s^m+(\gamma_1+\gamma_2)(1+s+\cdots+s^{m-1}))+\gamma_1+\gamma_2$$
$$=\alpha_{e_{m+1}}s^{m+1}+(\gamma_1+\gamma_2)(1+s+\cdots +s^m).$$
Here the first equation follows by (ii) and the second equation by the inductive hypothesis. This proves~(iii).
\end{proof}

The final lemma of this section studies the case of eventually periodic words.

\begin{lemma}\label{le2}
Let $h$ be as in Lemma \ref{le1}. Assume that $w=\omega(h)$ is eventually periodic. Then $\gamma_1=\gamma_2=0$ and $\alpha_1=\alpha_2=\cdots =\alpha_{p-1}$.
\end{lemma}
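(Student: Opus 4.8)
The plan is to assume $w=\omega(h)$ is eventually periodic and derive the two conclusions $\gamma_1=\gamma_2=0$ and $\alpha_1=\cdots=\alpha_{p-1}$ by exploiting the fact that the function $i\mapsto A_w(i)$, which records gaps between consecutive occurrences of $b$, must itself be eventually periodic. The key observation is that if $w$ is eventually periodic with eventual period $T$, then because $w$ has infinitely many occurrences of $b$, the gap sequence $(A_w(i))_{i\geq 1}$ is bounded; indeed the values $A_w(i)$ eventually cycle through the gaps occurring within one period block, so there is a constant $B$ with $A_w(i)\leq B$ for all $i$. The whole proof then reduces to showing that the explicit formula from Lemma~\ref{le1}(iii) forces the claimed degeneracies, since that formula grows like $s^m$ unless the relevant coefficients vanish.

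First I would treat the case $s\geq 2$. Choose any index of the form $N=d_mp^m$ with $d_m\in\{1,\ldots,p-1\}$, i.e. $k=0$ in Lemma~\ref{le1}(iii), so that
$$
A_w(d_mp^m)=\alpha_{d_m}s^m+(\gamma_1+\gamma_2)(1+s+\cdots+s^{m-1}).
$$
Letting $m\to\infty$ while keeping $d_m$ fixed at some value $j\in\{1,\ldots,p-1\}$, the right-hand side stays bounded only if the coefficient of the growing term vanishes. Since $s\geq 2$, both $s^m$ and $1+s+\cdots+s^{m-1}$ tend to infinity, and the expression is a nonnegative combination; boundedness forces $\alpha_j=0$ for every $j$ and also $\gamma_1+\gamma_2=0$, whence $\gamma_1=\gamma_2=0$ and all $\alpha_i=0$, which is stronger than and implies the claim. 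The main obstacle is the remaining case $s=1$, where the formula becomes $A_w(d_mp^m+\cdots)=\alpha_{d_m}+m(\gamma_1+\gamma_2)$; here boundedness of $A_w$ immediately gives $\gamma_1+\gamma_2=0$, hence $\gamma_1=\gamma_2=0$, and then $A_w(d_mp^m)=\alpha_{d_m}$ independent of $m$, so I must separately argue that all the $\alpha_j$ coincide.

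To finish the $s=1$ case I would use eventual periodicity more sharply together with Lemma~\ref{le1}(i)--(ii). With $s=1$ and $\gamma_1=\gamma_2=0$, part (ii) gives $A_w(pi)=A_w(i)$ and part (i) gives $A_w(i+pn)=\alpha_i$ for $i\in\{1,\ldots,p-1\}$; combining these, the base-$p$ digit structure of the argument shows that $A_w(i)=\alpha_j$ whenever $j$ is the least significant nonzero base-$p$ digit of $i$. Thus the gap sequence takes exactly the values $\alpha_1,\ldots,\alpha_{p-1}$, distributed according to lowest nonzero $p$-adic digit. If these values were not all equal, this pattern of gaps would not be eventually periodic: for any proposed eventual period $T$ and any starting point, one can locate two indices $i<i'$ with $i'\equiv i \pmod T$ but with different lowest nonzero base-$p$ digits (since the lowest nonzero digit is not a periodic function of the index modulo any fixed $T$), giving $A_w(i)\neq A_w(i')$ and contradicting eventual periodicity. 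This non-periodicity argument for the lowest-nonzero-digit function is the technical heart of the case and the step I expect to require the most care, so I would isolate it as the crux, choosing the witnessing indices explicitly (for instance comparing an index whose lowest nonzero digit is $1$ against a nearby multiple-of-$p$ shift whose lowest nonzero digit differs) to make the contradiction concrete.
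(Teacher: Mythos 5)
Your proposal is correct and follows essentially the same route as the paper: eventual periodicity makes the gap sequence $(A_w(i))_{i\geq 1}$ take only finitely many values, the growth of the formula in Lemma~\ref{le1}(iii) then forces $\gamma_1=\gamma_2=0$ and (once some $\alpha_i\neq 0$) $s=1$, and finally the non-periodicity of the lowest-nonzero-base-$p$-digit pattern forces all $\alpha_j$ to coincide. The one step you deferred is closed in the paper by writing the period as $d=ep^m+fp^{m+1}$ with $e\in\{1,\ldots,p-1\}$ and comparing $A_w(jp^n)=\alpha_j$ with $A_w(jp^n+d)=\alpha_e$ for $n>m$ with $p^n\geq i_0$, which yields $\alpha_j=\alpha_e$ for every $j$; note that the witnesses must realize lowest nonzero digits carrying \emph{different $\alpha$-values}, not merely different digits, and this choice does exactly that.
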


 \begin{proof} Since $w$ is eventually periodic, also the sequence $(A_w(i))_{i\geq 1}$ is eventually periodic. In particular, this sequence takes only finitely many different values.
Hence, by Lemma \ref{le1}, we have $\gamma_1=\gamma_2=0$. If $\alpha_1=\cdots=\alpha_{p-1}=0$, the claim of the lemma holds. Assume that some $\alpha_i$ is nonzero. Then Lemma \ref{le1} implies that $s=1$.

\medskip
Assume
$$A_w(i)=A_w(i+d) \hspace{3mm} \mbox{ for all } i\geq i_0,$$
where $i_0$ is an integer and $d=ep^m+fp^{m+1}$ for some $m\geq 0$, $e\in \{1,\ldots,p-1\}$ and $f\geq 0$. Choose an integer $n$ such that $n>m$ and $p^n \geq i_0$. Then
$$A_w(jp^n)=A_w(ep^m+fp^{m+1}+jp^n)$$
for $j=1,\ldots,p-1$.
Now Lemma \ref{le1} implies that
$$\alpha_j=\alpha_e$$
for $j=1,\ldots,p-1$. This implies the claim.
\end{proof}

\section{Commuting nonsingular morphisms $g_1$ and $g_2$ such that both $g_1(b)$ and $g_2(b)$ have at least two occurrences of $b$}
In this section $X=\{a,b\}$. We will consider nonsingular morphisms $g_1,g_2\in \Tri(X^*)$ such that $|g_i(b)|_b\geq 2$ for $i=1,2$. We have two different cases to consider according to whether $|g_1(b)|_b$ and $|g_2(b)|_b$ are multiplicatively independent or not.
Recall that two integers $p\geq 2$ and $q\geq 2$ are {\em multiplicatively dependent} if there are positive integers $r,m,n$ such that $p=r^m$ and $q=r^n$ (see \cite{AS}).

\subsection{The numbers of occurrences of $b$ are multiplicatively independent}

\begin{lemma}\label{le3}
Let $g_i\in \Tri(X^*)$, $i=1,2$, be morphisms such that
$$g_1(a)=a^s \hspace{3mm} \mbox{ and } \hspace{3mm} g_1(b)=a^{\gamma_1}ba^{\alpha_1}ba^{\alpha_2}b\cdots ba^{\alpha_{p-1}}ba^{\gamma_2}$$
and
$$g_2(a)=a^t \hspace{3mm} \mbox{ and } \hspace{3mm} g_2(b)=a^{\delta_1}ba^{\beta_1}ba^{\beta_2}b\cdots ba^{\beta_{q-1}}ba^{\delta_2}$$
where $s,t\geq1$, $p,q \geq 2$ and
$\gamma_1,\gamma_2,\delta_1,\delta_2,\alpha_1,\ldots,\alpha_{p-1},\beta_1,\ldots,\beta_{q-1}\geq 0$.
Assume that $p$ and $q$ are multiplicatively independent.  Assume that $g_1(b)\not\in b^*$.
Assume that $\omega(g_1)=\omega(g_2)$. Then $s=t=1$ and $\gamma_1=\gamma_2=\delta_1=\delta_2=0$.
\end{lemma}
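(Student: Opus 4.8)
The plan is to use the shared infinite word $w = \omega(g_1) = \omega(g_2)$ as a rigid object that must simultaneously satisfy the two recurrences coming from Lemma~\ref{le1}, one with base $p$ and one with base $q$, and to extract the conclusions $s=t=1$ and $\gamma_1=\gamma_2=\delta_1=\delta_2=0$ from the incompatibility of these recurrences when $p$ and $q$ are multiplicatively independent. Concretely, since $w=\omega(g_1)$, Lemma~\ref{le1}(iii) gives a closed form for $A_w(d_m p^m + \cdots + d_{m+k}p^{m+k})$ in terms of the $\alpha_i$, $s$, and $\gamma_1+\gamma_2$; and since the \emph{same} word $w$ equals $\omega(g_2)$, the analogous formula holds with $q$, the $\beta_j$, $t$, and $\delta_1+\delta_2$. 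The strategy is to play these two descriptions against each other along indices that are simultaneously nice in both bases.

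First I would dispose of the parameters $s$ and $t$. The key observation is that the hypothesis $\omega(g_1)=\omega(g_2)$ should be viewed through the lens of automatic sequences, as the introduction and Section~5 flag: the sequence $(A_w(i))_{i\ge 1}$ is essentially $p$-automatic (the values are governed by reading the base-$p$ digits of $i$ and the recurrence of Lemma~\ref{le1}(ii)) and also $q$-automatic. If either $s\ge 2$ or some $\gamma$-sum is nonzero, Lemma~\ref{le1}(iii) shows $A_w$ grows like $s^m$ along indices $d_m p^m$, so the sequence takes infinitely many values and is genuinely unbounded; I would argue this forces a base-$p$ automaticity that, combined with the base-$q$ automaticity and Cobham's theorem (invoked in Section~5), contradicts the multiplicative independence of $p$ and $q$ unless the sequence is eventually periodic. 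But eventual periodicity is exactly the situation controlled by Lemma~\ref{le2}, which already forces $\gamma_1=\gamma_2=0$ and all $\alpha_i$ equal; and since $g_1(b)\notin b^*$ rules out the all-zero case, Lemma~\ref{le2} then pins down $s=1$.

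Having reduced to the eventually periodic (equivalently, bounded-value) regime, I would run the same analysis on the $g_2$ side to get $t=1$ and $\delta_1=\delta_2=0$, and $\gamma_1=\gamma_2=0$ on the $g_1$ side, which is precisely the desired conclusion. The cleanest route is: apply Lemma~\ref{le2} once with $h=g_1$ and once with $h=g_2$, each giving its own collapse, after first establishing that $w$ must be eventually periodic. So the real content is the single implication ``$\omega(g_1)=\omega(g_2)$ with $p,q$ multiplicatively independent $\Rightarrow$ $w$ eventually periodic.''

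The hard part will be exactly this implication, and I expect it to rest on Cobham's theorem. The delicate point is to verify the hypotheses of Cobham rigorously: I must show that $(A_w(i))$ takes only finitely many values (so that it can be regarded as a sequence over a finite alphabet, hence a candidate for being $p$- and $q$-automatic) \emph{before} I can invoke Cobham to conclude periodicity, yet it is precisely finiteness of the value set that I ultimately want. The way around this apparent circularity is to argue by contradiction: assume the value set is infinite, locate the unbounded growth along a geometric progression of indices in base $p$ via Lemma~\ref{le1}(iii), and show directly that no base-$q$ expansion formula can reproduce the same values at those indices unless $p$ and $q$ are multiplicatively dependent — this is where multiplicative independence does the essential work. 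Managing the interaction of the two digit expansions at a common index, and making the growth-rate comparison $s^m$ versus $t^{m'}$ precise, is the step that requires the most care.
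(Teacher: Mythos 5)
You have correctly identified where the difficulty lies, but the step you defer to the end --- ``assume the value set is infinite \ldots{} and show directly that no base-$q$ expansion formula can reproduce the same values at those indices'' --- is not a detail requiring care; it \emph{is} the entire content of Lemma~\ref{le3}, and your sketch does not contain the idea that makes it work. The paper's mechanism is the following. Let $\beta_z=\max_i\beta_i$ with $z$ minimal. Lemma~\ref{le1}(iii) applied to $g_2$ shows that $A_w(zq^n)\ge A_w(i)$ for every $i<zq^n$ (the value at an index depends only on its $q$-adic valuation and leading base-$q$ digit, and both contributions are maximized at $zq^n$ among indices below it). Multiplicative independence then guarantees some $n$ for which $zq^n=p^{\tau(n)}(i_n+x_1p+\cdots+x_kp^k)$ with $k\ge 2$ and $x_k\neq 0$; otherwise the quantities $zq^n/p^{\tau(n)}$ would repeat and force $p^{\tau(n)-\tau(m)}=q^{n-m}$. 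Taking $K_1=yp^{\tau(n)+k-1}<zq^n=K_2$ with $\alpha_y$ maximal, the base-$p$ formula evaluates $A_w(K_1)$ with exponent $\tau(n)+k-1$ and $A_w(K_2)$ with the strictly smaller exponent $\tau(n)$, and the monotonicity inequality $A_w(K_1)\le A_w(K_2)$ then forces $\gamma_1+\gamma_2=0$ and (since $\alpha_y\neq 0$ by $g_1(b)\notin b^*$) $s=1$. Your proposed ``growth-rate comparison $s^m$ versus $t^{m'}$'' does not obviously substitute for this: the two closed forms live on different index sets, the case $s=t=1$ with $\gamma_1+\gamma_2>0$ gives only logarithmic growth, and when $p$ and $q$ share prime factors the valuation $v_q(p^m)$ can itself be unbounded, so a naive comparison of exponents at common indices does not close.

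A secondary structural point: Cobham's theorem and Lemma~\ref{le2} play no role in the proof of Lemma~\ref{le3}; the paper invokes them only afterwards, in Theorem~\ref{the3}, precisely because Lemma~\ref{le3} first establishes $s=t=1$ and $\gamma_1=\gamma_2=\delta_1=\delta_2=0$, which makes $(A_w(i))$ finite-valued and hence genuinely $p$- and $q$-automatic. Your plan routes through eventual periodicity and Lemma~\ref{le2}, which would prove more than the lemma asserts (all $\alpha_i$ equal), but only after the finiteness-of-values step that you leave open. Once that step is supplied, the remaining conclusions ($\delta_1=\delta_2=0$ and $t=1$) follow as you say, and as in the paper, from boundedness of $A_w$ read against the base-$q$ formula.
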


 \begin{proof}
 Let $z$ be the smallest positive integer such that $\beta_z=\max\{\beta_i\mid i=1,2,\ldots,q-1\}$. Then $\beta_z\geq 0$ but it is possible that $\beta_z=0$.

\medskip
By Lemma \ref{le1} we have
\begin{equation}\label{equ2}
A_{\omega(g_2)}(i)\leq A_{\omega(g_2)}(zq^n)
\end{equation}
for $n\geq 1$ and $i<zq^n$.
Consider the numbers $zq^n$, $n\geq 1$. For $n\geq 1$, let
$$zq^n=p^{\tau(n)}(i_n+pj_n),$$
where $\tau(n),j_n\geq 0$ and $i_n\in \{1,\ldots,p-1\}$.

\medskip
Now, the set $\{j_n\mid n\geq 1\}$ is infinite. To see this, assume on the contrary that it is finite. Then there are integers $m$ and $n$ such that $i_m+pj_m=i_n+pj_n$ and $m<n$. This
implies that
$$\frac{zq^m}{p^{\tau(m)}}=\frac{zq^n}{p^{\tau(n)}}\ .$$
Hence $p^{\tau(n)-\tau(m)}=q^{n-m}$, which contradicts the assumption. It follows that the set $\{j_n\mid n\geq 1\}$ is infinite.
Therefore there is an integer $n\geq 1$ such that
$$zq^n=p^{\tau(n)}(i_n+x_1p+\cdots +x_kp^k)$$
where $k\geq 2$ and $x_k\neq 0$.

\medskip
Next, let $y$ be an integer such that $\alpha_y=\max \{\alpha_i\mid i=1,\ldots,p-1\}$ and consider the numbers
$$K_1=yp^{\tau(n)+k-1} \hspace{3mm} \mbox{ and } \hspace{3mm} K_2=zq^n=p^{\tau(n)}(i_n+x_1p+\cdots+x_kp^k).$$
Then we have $K_1<K_2$. Therefore (\ref{equ2}) implies that
$$A_{\omega(g_1)}(K_1)=A_{\omega(g_2)}(K_1)\leq A_{\omega(g_2)}(K_2)=A_{\omega(g_1)}(K_2).$$
On the other hand, Lemma \ref{le1} implies that
$$A_{\omega(g_1)}(K_1)=\alpha_y s^{\tau(n)+k-1} +(\gamma_1+\gamma_2)(1+s+\cdots + s^{\tau(n)+k-2})$$
and
$$A_{\omega(g_1)}(K_2)=\alpha_{i_n} s^{\tau(n)} +(\gamma_1+\gamma_2)(1+s+\cdots + s^{\tau(n)-1}).$$
Since $A_{\omega(g_1)}(K_1)\leq A_{\omega(g_1)}(K_2)$, we have $\gamma_1=\gamma_2=0$. If ${\alpha}_y=0$, we would have $g_1(b)\in b^*$ which contradicts our assumption. Hence $\alpha_y\neq 0$ and $s=1$.

\eject
Since $\gamma_1=\gamma_2=0$ and $s=1$, we have $A_{\omega(g_1)}(i)\in \{\alpha_1,\ldots,\alpha_{p-1}\}$ for all $i\geq 1$. Now the equality $\omega(g_1)=\omega(g_2)$ implies that $\delta_1=\delta_2=0$ and $t=1$.
\end{proof}

The next theorem gives all nonsingular morphisms $g_i\in \Tri(X^*)$, $i=1,2$, such that $g_1g_2=g_2g_1$ and the numbers $|g_1(b)|_b$ and $|g_2(b)|_b$ are multiplicatively independent integers larger than one. In the proof we use automatic sequences and Cobham's theorem characterizing sequences which are $p$-automatic and $q$-automatic for multiplicatively independent integers $p$ and $q$ (see \cite{AS}).

\begin{theorem}\label{the3}
Let $g_i\in \Tri(X^*)$, $i=1,2$,  be morphisms such that
$$g_1(a)=a^s \hspace{3mm} \mbox{ and } \hspace{3mm} g_1(b)=a^{\gamma_1}ba^{\alpha_1}ba^{\alpha_2}b\cdots ba^{\alpha_{p-1}}ba^{\gamma_2}$$
and
$$g_2(a)=a^t \hspace{3mm} \mbox{ and } \hspace{3mm} g_2(b)=a^{\delta_1}ba^{\beta_1}ba^{\beta_2}b\cdots ba^{\beta_{q-1}}ba^{\delta_2}$$
where $s,t\geq1$, $p,q \geq 2$ and
$\gamma_1,\gamma_2,\delta_1,\delta_2,\alpha_1,\ldots,\alpha_{p-1},\beta_1,\ldots,\beta_{q-1}\geq 0$.
Assume that $p$ and $q$ are multiplicatively independent.
Then $g_1g_2=g_2g_1$ if and only if at least one of the following conditions holds:\smallskip

(i) $g_i(b)\in b^*$ for $i=1,2$,

(ii) $g_1(a)=g_2(a)=a$, $g_1(b)=(ba^{\alpha})^{p-1}b$ and $g_2(b)=(ba^{\alpha})^{q-1}b$, where $\alpha=\alpha_1$.
\end{theorem}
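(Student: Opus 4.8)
The plan is to prove the two directions separately. The easy direction is to verify that each of conditions (i) and (ii) implies $g_1g_2=g_2g_1$. For (i), both $g_i$ act on $b$ by $b\mapsto b^{p}$ and $b\mapsto b^q$ respectively (with $a$ mapping to a power of $a$), so commutation reduces to the scalar identity $pq=qp$ together with the fact that the $a$-image does not interfere, exactly as in Example~\ref{ex1} and Example~\ref{ex6}. For (ii), the morphisms have $g_1(a)=g_2(a)=a$ and $g_i(b)=(ba^\alpha)^{\bullet}b$, which is precisely the situation of Example~\ref{ex3} (with $q$ there renamed), so the identity $g_1g_2=g_2g_1$ is already established there. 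So the forward direction of the ``if'' is essentially a citation of the examples.

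The substantive direction is the ``only if'': assuming $g_1g_2=g_2g_1$ and that neither condition (i) nor (ii) holds, I must derive a contradiction. First I would dispose of the degenerate possibilities. If exactly one of $g_1(b),g_2(b)$ lies in $b^*$, then I would show directly that commutation fails by comparing the $a$-content of $g_1g_2(b)$ and $g_2g_1(b)$; and if both lie in $b^*$ we are in case (i). So I may assume $g_1(b)\notin b^*$ (and by symmetry handle $g_2(b)\notin b^*$). Now I want to invoke the machinery of Section~5. The key observation is that since $p=|g_1(b)|_b\geq 2$ and $q=|g_2(b)|_b\geq 2$, the morphisms $h_i=\mathbf{RIGHT}(g_i)$ satisfy $h_i(c)\neq c$, so Lemma~\ref{ba} applies and gives $\omega(g_1)=\omega(g_2)$. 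Then Lemma~\ref{le3} immediately yields $s=t=1$ and $\gamma_1=\gamma_2=\delta_1=\delta_2=0$. Thus the infinite word $w=\omega(g_1)=\omega(g_2)$ has the property that its gap sequence $(A_w(i))_{i\geq1}$ takes values only in $\{\alpha_1,\dots,\alpha_{p-1}\}$ and also only in $\{\beta_1,\dots,\beta_{q-1}\}$, a finite set.

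The heart of the argument is to show that this forces all the $\alpha_i$ equal and all the $\beta_j$ equal, which is where Cobham's theorem enters. Having normalized to $s=t=1$, $\gamma_i=\delta_i=0$, the recursions of Lemma~\ref{le1} show that the sequence $(A_w(i))_i$ is simultaneously $p$-automatic and $q$-automatic: part (i) of Lemma~\ref{le1} gives $A_w(i+pn)=\alpha_i$, exhibiting the values as a function of the base-$p$ expansion (and symmetrically for $q$), and since the value set is finite, the sequence is an automatic sequence in both bases. Because $p$ and $q$ are multiplicatively independent, Cobham's theorem forces $(A_w(i))_i$ to be eventually periodic. At this point I would feed the eventual periodicity back through Lemma~\ref{le2} (applied to $g_1$), which concludes $\alpha_1=\alpha_2=\cdots=\alpha_{p-1}$; call the common value $\alpha$. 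The same reasoning with $g_2$ gives $\beta_1=\cdots=\beta_{q-1}$, and comparing the two constant gap patterns against $w$ shows the $\beta$-constant equals $\alpha$ as well. Unwinding the definitions, $g_1(b)=(ba^\alpha)^{p-1}b$ and $g_2(b)=(ba^\alpha)^{q-1}b$, which is exactly condition (ii), contradicting our assumption and completing the proof.

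I expect the main obstacle to be the clean verification that, after normalization, the gap sequence really is automatic in the technical sense required to apply Cobham's theorem — one must confirm that Lemma~\ref{le1}(i) genuinely presents $(A_w(i))_i$ as the output of a finite automaton reading base-$p$ digits, with the finiteness of the value set (guaranteed because $s=1$ and $\gamma_i=0$) being exactly what makes the automaton finite. Once that bridge to Cobham's theorem is in place, the remaining steps are bookkeeping via Lemmas~\ref{le2} and~\ref{le3}.
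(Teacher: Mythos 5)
Your proposal follows essentially the same route as the paper's proof: Lemma~\ref{ba} gives $\omega(g_1)=\omega(g_2)$, Lemma~\ref{le3} normalizes to $s=t=1$ and $\gamma_1=\gamma_2=\delta_1=\delta_2=0$, Cobham's theorem applied to the gap sequence (which is $p$-automatic and $q$-automatic by Lemma~\ref{le1}) yields eventual periodicity, and Lemma~\ref{le2} then forces the constant gap pattern of condition (ii). The one minor divergence is the sub-case where exactly one of $g_1(b),g_2(b)$ lies in $b^*$: comparing total $a$-content alone does not suffice there (the counts coincide when $s=p$), but the equality $\omega(g_1)=\omega(g_2)$ that you invoke anyway settles that case at once, so nothing essential is missing.
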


\begin{proof}
If (i) or (ii) holds, then $g_1g_2=g_2g_1$ (see Examples \ref{ex1} and \ref{ex3}).

Assume that $g_1g_2=g_2g_1$. By Lemma \ref{ba} we have $\omega(g_1)=\omega(g_2)$. Hence, if $g_1(b)\in b^*$, also $g_2(b)\in b^*$ and (i) holds.
Assume that $g_1(b)\not\in b^*$ and $g_2(b)\not\in b^*$.

Now Lemma \ref{le3} implies that $s=t=1$ and $\gamma_1=\gamma_2=\delta_1=\delta_2=0$. By Lemma \ref{le1}, the sequence $(A_{\omega(g_1)}(i))_{i\geq 1}$
is $p$-automatic and the sequence $(A_{\omega(g_2)}(i))_{i\geq 1}$ is $q$-automatic. Since these sequences are equal and the numbers $p$ and $q$ are multiplicatively independent,
$(A_{\omega(g_1)}(i))_{i\geq 1}$ is eventually periodic. Hence $\omega(g_1)$ is eventually periodic. Now Lemma \ref{le2} implies that $\alpha_1=\alpha_2=\cdots =\alpha_{p-1}$.
Hence $g_1(b)=(ba^{\alpha})^{p-1}b$ where $\alpha=\alpha_1$.

A similar argument shows that $g_2(b)=(ba^{\beta})^{q-1}b$, where $\beta=\beta_1$. Since $\omega(g_1)=\omega(g_2)$, we have $\alpha_1=\beta_1$. Hence (ii) holds.
\end{proof}

\subsection{The numbers of occurrences of $b$ are multiplicatively dependent}
In this subsection we first consider the case that $|g_1(b)|_b$ and $|g_2(b)|_b$ are equal.

\begin{lemma}\label{le8}
Let $g_i\in \Tri(X^*)$, $i=1,2$, be morphisms such that
$$g_1(a)=a^s \hspace{3mm} \mbox{ and } \hspace{3mm} g_1(b)=a^{\gamma_1}ba^{\alpha_1}ba^{\alpha_2}b\cdots ba^{\alpha_{p-1}}ba^{\gamma_2}\vspace*{-1mm} $$
and \vspace*{-1mm}
$$g_2(a)=a^t \hspace{3mm} \mbox{ and } \hspace{3mm} g_2(b)=a^{\delta_1}ba^{\beta_1}ba^{\beta_2}b\cdots ba^{\beta_{q-1}}ba^{\delta_2}$$
where $s,t\geq1$, $p,q \geq 2$ and
$\gamma_1,\gamma_2,\delta_1,\delta_2,\alpha_1,\ldots,\alpha_{p-1},\beta_1,\ldots,\beta_{q-1}\geq 0$.
Assume that $p=q$. If $g_1g_2=g_2g_1$, then at least one of the following conditions holds:\smallskip

(i) $g_1=g_2$,

(ii) $g_i(b)\in b^*$ for $i=1,2$,

(iii) $g_1(a)=g_2(a)=a$ and the words $g_1(b)$ and $g_2(b)$ are $a$-conjugates.
\end{lemma}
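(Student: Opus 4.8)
The plan is to derive the classification from the equality of the infinite words $\omega(g_1)$ and $\omega(g_2)$, supplemented by a single explicit comparison of $g_1g_2(b)$ with $g_2g_1(b)$. Since $p,q\ge 2$, each of $g_1(b)$ and $g_2(b)$ has at least two occurrences of $b$, so in the notation preceding Lemma~\ref{ba} the word $v$ is nonempty for each $g_i$; hence $h_i(c)\neq c$, the words $\omega(g_1),\omega(g_2)$ are defined, and because $g_1g_2=g_2g_1$ Lemma~\ref{ba} gives $\omega(g_1)=\omega(g_2)$. Write $w$ for this common word and put $A(i)=A_w(i)$. Applying part (i) of Lemma~\ref{le1} to both morphisms and using $p=q$, I get $\alpha_i=\beta_i$ for all $i\in\{1,\ldots,p-1\}$; thus the block $z=ba^{\alpha_1}b\cdots ba^{\alpha_{p-1}}b$ is common to $g_1$ and $g_2$, and $g_1(b)=a^{\gamma_1}za^{\gamma_2}$, $g_2(b)=a^{\delta_1}za^{\delta_2}$. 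Part (ii) of Lemma~\ref{le1} gives $A(pi)=sA(i)+\gamma_1+\gamma_2$ and $A(pi)=tA(i)+\delta_1+\delta_2$, so
$$(s-t)\,A(i)=(\delta_1+\delta_2)-(\gamma_1+\gamma_2)\qquad\text{for every }i\ge1 .$$

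First suppose $(A(i))_{i\ge1}$ takes at least two distinct values. Then the displayed identity forces $s=t$ and $\gamma_1+\gamma_2=\delta_1+\delta_2$. At this point $\omega$-equality has been exhausted, and I would compute the two words $g_1g_2(b)$ and $g_2g_1(b)$ directly from the definitions: using $s=t$ and the common block $z$, both are concatenations of $p$ copies of $z$ separated by powers of $a$, and equating them letter by letter the middle gaps are consistent, while the leading and trailing $a$-powers yield $(s-1)(\gamma_1-\delta_1)=0$ and $(s-1)(\gamma_2-\delta_2)=0$. If $s=t=1$ this leaves only $\gamma_1+\gamma_2=\delta_1+\delta_2$, so $g_1(b)$ and $g_2(b)$ are $a$-conjugates and, since $g_1(a)=g_2(a)=a$, condition (iii) holds. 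If $s=t\ge2$ the two equalities force $\gamma_1=\delta_1$ and $\gamma_2=\delta_2$, whence $g_1=g_2$ and (i) holds.

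It remains to treat the case that $A(i)$ is constant, say $A(i)=c$ for all $i$. Then Lemma~\ref{le1}(i) gives $\alpha_i=\beta_i=c$ for all $i$, and Lemma~\ref{le1}(ii) gives $(s-1)c+\gamma_1+\gamma_2=0$ and $(t-1)c+\delta_1+\delta_2=0$. If $c=0$ these relations force $\gamma_1=\gamma_2=\delta_1=\delta_2=0$ and all inner exponents vanish, so $g_1(b)=g_2(b)=b^p\in b^*$ and (ii) holds. If $c\ge1$, nonnegativity forces $s=t=1$ and $\gamma_1=\gamma_2=\delta_1=\delta_2=0$, so $g_1(b)=g_2(b)=(ba^c)^{p-1}b$ and $g_1(a)=g_2(a)=a$; thus $g_1=g_2$ and (i) holds.

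The one genuinely delicate point is the passage in the non-constant case from $s=t\ge2$ to $\gamma_1=\delta_1$, $\gamma_2=\delta_2$. Equality of the infinite words $\omega(g_i)$ only records the sum $\gamma_1+\gamma_2$: the leading $a$-block is deleted when forming $\omega$, and Lemma~\ref{le1} sees the two boundary exponents solely through $\gamma_1+\gamma_2$, so it can never separate $\gamma_1$ from $\gamma_2$. The separation comes precisely from comparing the leading and trailing $a$-powers of $g_1g_2(b)$ and $g_2g_1(b)$, which is where the full commutation relation $g_1g_2=g_2g_1$—rather than mere $\omega$-equality—is used; setting up this comparison cleanly is the step I expect to require the most care.
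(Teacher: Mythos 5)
Your proof is correct and follows essentially the same route as the paper's: Lemma \ref{ba} gives $\omega(g_1)=\omega(g_2)$, Lemma \ref{le1} yields the functional equation $(s-t)A(i)=(\delta_1+\delta_2)-(\gamma_1+\gamma_2)$, and the remaining separation of $\gamma_1$ from $\gamma_2$ when $s=t>1$ is obtained, exactly as in the paper, by comparing the leading and trailing $a$-powers of $g_1g_2(b)$ and $g_2g_1(b)$. The only cosmetic difference is that you split on whether $(A(i))$ is constant rather than on whether it takes finitely or infinitely many values, which changes nothing of substance.
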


  \begin{proof}
  Assume that $g_1g_2=g_2g_1$. Let $w_i=\omega(g_i)$ for $i=1,2$. By Lemma \ref{ba} we have $w_1=w_2$. Since $p=q$, the equation $w_1=w_2$ implies that $\alpha_i=\beta_i$ for $i=1,\ldots,p-1$.

If $g_1(b)\in b^*$, we have $w_1=b^{\omega}$. This implies that $g_2(b)\in b^*$ and hence (ii) holds.

Assume that $g_1(b)\not\in b^*$ and $g_2(b)\not \in b^*$.

Next, assume that $A_{w_1}(i)$ takes only finitely many different values. Then Lemma \ref{le1} implies that $\gamma_1=\gamma_2=0$. Since $g_1(b)\not\in b^*$, some $\alpha_i$ is nonzero. This implies that $s=1$. By a similar reasoning it is seen that $\delta_1=\delta_2=0$ and $t=1$. Then $g_1(a)=g_2(a)$ and $g_1(b)=g_2(b)$ and condition (i)
holds.

\medskip
Assume then that $A_{w_1}(i)$ takes infinitely many values.
Since $A_{w_1}(pi)=A_{w_2}(pi)$ for all $i\geq 1$, Lemma \ref{le1} implies that
$$sA_{w_1}(i)+\gamma_1+\gamma_2=tA_{w_2}(i)+\delta_1+\delta_2$$
for all $i\geq 1$. Because this equation holds for infinitely many different values of $A_{w_1}(i)=A_{w_2}(i)$, it follows that $s=t$ and $\gamma_1+\gamma_2=\delta_1+\delta_2$.
If now $s=t=1$, we have (iii).

\medskip
Assume that $s=t>1$. By counting the number of occurrences of $a$ before the first occurrence of $b$ in $g_1g_2(b)=g_2g_1(b)$, we see that
$$s\delta_1+\gamma_1=t\gamma_1+\delta_1.$$
By counting the number of occurrences of $a$ after the last occurrence of $b$ in $g_1g_2(b)=g_2g_1(b)$, we see that
$$\gamma_2+s\delta_2=\delta_2 +t\gamma_2.$$
Hence $(s-1)\delta_1=(t-1)\gamma_1$ and $(s-1)\delta_2=(t-1)\gamma_2$. Since $s=t>1$ we have $\gamma_1=\delta_1$ and $\gamma_2=\delta_2$. Therefore condition (i) holds.
\end{proof}

The next theorem gives all nonsingular morphisms $g_i\in \Tri(X^*)$, $i=1,2$, such that $g_1g_2=g_2g_1$ and the numbers $|g_1(b)|_b$ and $|g_2(b)|_b$ are
multiplicatively dependent integers larger than one.

\begin{theorem}\label{the4}
Let $g_i\in \Tri(X^*)$, $i=1,2$, be morphisms such that
$$g_1(a)=a^s \hspace{3mm} \mbox{ and } \hspace{3mm} g_1(b)=a^{\gamma_1}ba^{\alpha_1}ba^{\alpha_2}b\cdots ba^{\alpha_{p-1}}ba^{\gamma_2} \vspace*{-1mm}$$
and\vspace*{-1mm}
$$g_2(a)=a^t \hspace{3mm} \mbox{ and } \hspace{3mm} g_2(b)=a^{\delta_1}ba^{\beta_1}ba^{\beta_2}b\cdots ba^{\beta_{q-1}}ba^{\delta_2}$$
where $s,t\geq1$, $p,q \geq 2$ and
$\gamma_1,\gamma_2,\delta_1,\delta_2,\alpha_1,\ldots,\alpha_{p-1},\beta_1,\ldots,\beta_{q-1}\geq 0$.
Assume that $p=r^m$ and $q=r^n$ where $m,n,r$ are positive integers. Then $g_1g_2=g_2g_1$ if and only if at least one of the following
conditions holds:\smallskip

(i) $g_1^n=g_2^m$,

(ii) $g_i(b)\in b^*$ for $i=1,2$,

(iii) $g_1(a)=g_2(a)=a$ and the words $g_1^n(b)$ and  $g_2^m(b)$ are $a$-conjugates.
\end{theorem}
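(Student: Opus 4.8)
The plan is to reduce Theorem~\ref{the4} to the equal-exponent case already handled in Lemma~\ref{le8}, by passing to powers of $g_1$ and $g_2$ that equalize the number of $b$'s. First I would dispose of the easy direction: if any of (i), (ii), (iii) holds, then $g_1g_2=g_2g_1$. Condition (ii) is Example~\ref{ex6} (or a direct computation as in Theorem~\ref{the3}), condition (iii) is exactly Example~\ref{ex5}, and condition (i) follows from Example~\ref{ex4}, since $g_1^n=g_2^m$ together with $g_1\neq g_2$ forces commutativity via Theorem~\ref{the1}. The genuine content is the forward direction, so assume $g_1g_2=g_2g_1$.

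The key idea for the forward direction is to observe that $|g_1^n(b)|_b=p^n=r^{mn}$ and $|g_2^m(b)|_b=q^m=r^{mn}$ are equal. Moreover $g_1^n$ and $g_2^m$ are again nonsingular upper triangular morphisms of the same shape as in Lemma~\ref{le8}, and they commute because $g_1$ and $g_2$ do. I would therefore apply Lemma~\ref{le8} to the pair $\{g_1^n,g_2^m\}$ (with the common value $p^n=q^m$ playing the role of the parameter $p=q$). This yields that at least one of the following holds: $g_1^n=g_2^m$, giving (i); or $g_1^n(b)\in b^*$; or $g_1^n(a)=g_2^m(a)=a$ and the words $g_1^n(b)$ and $g_2^m(b)$ are $a$-conjugates, giving (iii) directly.

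The remaining work is to translate the two ``degenerate'' conclusions of Lemma~\ref{le8} back into statements about $g_1$ and $g_2$ themselves. If $g_1^n(b)\in b^*$, then since applying $g_1$ cannot remove all the $a$'s introduced by $g_1(b)$ unless there were none to begin with, I must argue that $g_1(b)\in b^*$; concretely, $g_1^n(b)\in b^*$ forces $g_1(b)$ to contain no $a$, because $a$ maps to $a^s$ with $s\geq 1$ and the leading and trailing blocks $a^{\gamma_1},a^{\gamma_2}$ as well as the internal blocks survive under iteration. By symmetry (using $g_2^m(b)\in b^*$, which holds since $\omega(g_1)=\omega(g_2)=b^\omega$ by Lemma~\ref{ba}) one gets $g_2(b)\in b^*$, which is condition (ii). For the $a$-conjugate conclusion one already has condition (iii) on the nose, since $g_1^n(a)=a^{s^n}=a$ forces $s=1$, hence $g_1(a)=a$, and likewise $g_2(a)=a$.

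The main obstacle I expect is precisely the descent from $g_1^n$ back to $g_1$ in the borderline cases: verifying that $g_1^n(b)\in b^*$ implies $g_1(b)\in b^*$, and that $g_1^n(a)=a$ implies $g_1(a)=a$. The second is immediate from $s^n=1\Rightarrow s=1$. The first requires checking that iteration of an upper triangular morphism with $s\geq 1$ never cancels occurrences of $a$ once present; this is a short combinatorial argument but is the place where one must be careful. An alternative, cleaner route that avoids the descent entirely is to invoke Lemma~\ref{ba} at the outset to get $\omega(g_1)=\omega(g_2)$, and split on whether $g_1(b)\in b^*$: if so, $\omega(g_1)=b^\omega$ forces $g_2(b)\in b^*$ and we are in case~(ii), while if $g_1(b)\notin b^*$ and $g_2(b)\notin b^*$ we apply Lemma~\ref{le8} to $\{g_1^n,g_2^m\}$ knowing both images lie outside $b^*$, so only conclusions (i) and (iii) of that lemma can arise, and these transfer as above.
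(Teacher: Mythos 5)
Your argument is correct and follows essentially the same route as the paper: pass to $h_1=g_1^n$ and $h_2=g_2^m$, which commute and satisfy $|h_1(b)|_b=p^n=r^{mn}=q^m=|h_2(b)|_b$, apply Lemma~\ref{le8}, and translate its three conclusions back into (i)--(iii); you are in fact more explicit than the paper about why $g_1^n(b)\in b^*$ forces $g_1(b)\in b^*$ and why $s^n=1$ forces $s=1$. The only nit is that for the easy direction of condition (ii) the relevant reference is Example~\ref{ex1} (both images of $b$ are powers of $b$), not Example~\ref{ex6}, which concerns singular morphisms with $g_i(a)=\varepsilon$.
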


\begin{proof}
If at least one of the conditions (i), (ii) or (iii) holds, then $g_1g_2=g_2g_1$ (see Examples \ref{ex1}, \ref{ex4} and \ref{ex5}).

Assume $g_1g_2=g_2g_1$. Let $h_1=g_1^n$ and $h_2=g_2^m$. Then $|h_1(a)|_a=s^n$, $|h_2(a)|_a=t^m$ and $|h_1(b)|_b=p^n=r^{mn}=q^m=|h_2(b)|_b$.

Since $h_1h_2=h_2h_1$,  Lemma \ref{le8} implies that at least one of the following conditions holds:\smallskip

(a) $h_1=h_2$,

(b) $h_i(b)\in b^*$ for $i=1,2$,

(c) $h_1(a)=h_2(a)=a$ and the words $h_1(b)$ and $h_2(b)$ are $a$-conjugates.

Now (a) implies (i), (b) implies (ii) and (c) implies (iii).
\end{proof}

\section{Commuting nonsingular morphisms $g_1$ and $g_2$ such that $|g_1(b)|_b=1$ and $|g_2(b)|_b\geq 2$}
Let $X=\{a,b\}$. The following theorem gives all commuting nonsingular morphisms $g_i\in \Tri(X^*)$, $i=1,2$, such that $|g_1(b)|_b=1$ and $|g_2(b)|_b\geq 2$.

\begin{theorem}
Let $g_i \in \Tri(X^*)$, $i=1,2$,  be morphisms such that
$$g_1(a)=a^s \hspace{3mm} \mbox{ and } \hspace{3mm} g_1(b)=a^{\gamma_1}ba^{\gamma_2}$$
and
$$g_2(a)=a^t \hspace{3mm} \mbox{ and } \hspace{3mm} g_2(b)=a^{\delta_1}ba^{\beta_1}ba^{\beta_2}b\cdots ba^{\beta_{q-1}}ba^{\delta_2}$$
where $s,t\geq1$, $q \geq 2$ and
$\gamma_1,\gamma_2,\delta_1,\delta_2,\beta_1,\ldots,\beta_{q-1}\geq 0$.
Then $g_1g_2=g_2g_1$ if and only if at least one of the following conditions holds:\smallskip

(i) $g_1(x)=x$ for all $x\in \{a,b\}$,

(ii) $g_i(b)\in b^*$ for $i=1,2$.
\end{theorem}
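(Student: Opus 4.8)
The plan is to settle this case by a direct comparison of the two composite images $g_1g_2(b)$ and $g_2g_1(b)$, written out explicitly as words, rather than by the infinite-word and automatic-sequence machinery of Section~6. The point is that $g_1(b)$ contains a single occurrence of $b$, so both composite images are words over $\{a,b\}$ with exactly $q$ occurrences of $b$; equality of two such words is equivalent to equality of the lengths of their $q+1$ maximal blocks of $a$'s, and this will reduce the whole problem to an elementary nonnegativity argument. I do not expect to need Lemma~\ref{ba} here (which in any case may not apply directly, since $\mathbf{RIGHT}(g_1)(c)=ca^{\gamma_2}$ can equal $c$). The \emph{if} direction is immediate: if (i) holds then $g_1$ is the identity morphism and commutes with everything, while if (ii) holds then $g_1(b)=b$, $g_2(b)=b^q$, $g_1(a)=a^s$, $g_2(a)=a^t$, so the pair is of the form in Example~\ref{ex1} and commutes.

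For the \emph{only if} direction I would assume $g_1g_2=g_2g_1$. Since $g_1g_2(a)=a^{st}=g_2g_1(a)$ holds automatically, the equation is equivalent to $g_1g_2(b)=g_2g_1(b)$. I would then compute, tracking the $\gamma_1,\gamma_2$ contributed by each of the $q$ copies of $g_1(b)$,
$$g_1g_2(b)=a^{s\delta_1+\gamma_1}\,b\,a^{\gamma_2+s\beta_1+\gamma_1}\,b\cdots b\,a^{\gamma_2+s\beta_{q-1}+\gamma_1}\,b\,a^{\gamma_2+s\delta_2}$$
and
$$g_2g_1(b)=a^{t\gamma_1+\delta_1}\,b\,a^{\beta_1}\,b\cdots b\,a^{\beta_{q-1}}\,b\,a^{\delta_2+t\gamma_2}.$$
Matching the $q+1$ blocks of $a$'s yields the leading equation $s\delta_1+\gamma_1=t\gamma_1+\delta_1$, the trailing equation $\gamma_2+s\delta_2=\delta_2+t\gamma_2$, and the $q-1$ interior equations $\gamma_2+s\beta_i+\gamma_1=\beta_i$ for $i=1,\ldots,q-1$.

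The decisive step is the interior equation, which I would rewrite as $\gamma_1+\gamma_2+(s-1)\beta_i=0$. Since every parameter is nonnegative and $s\geq 1$, each summand is nonnegative, which forces $\gamma_1=\gamma_2=0$; and because $q\geq 2$ there is at least one interior equation to invoke. Substituting $\gamma_1=\gamma_2=0$ collapses the leading, trailing, and interior equations to $(s-1)\delta_1=0$, $(s-1)\delta_2=0$, and $(s-1)\beta_i=0$. I would finish by splitting on $s$: if $s=1$ then $g_1(a)=a$ and $g_1(b)=b$, giving condition~(i); if $s\geq 2$ then $\delta_1=\delta_2=0$ and $\beta_1=\cdots=\beta_{q-1}=0$, so $g_1(b)=b$ and $g_2(b)=b^q$, giving condition~(ii).

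I do not anticipate a genuine obstacle in this proof: once the two composite images are correctly written out, the argument is just a bookkeeping of $a$-exponents followed by a one-line nonnegativity observation. The only place requiring care is the explicit block decomposition above—in particular keeping straight the $\gamma_1$ and $\gamma_2$ attached to each of the $q$ copies of $g_1(b)$ and the boundary exponents $\delta_1,\delta_2$—after which the nonnegativity of the interior equations immediately pins down $\gamma_1=\gamma_2=0$ and drives the remaining case analysis.
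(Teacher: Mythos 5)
Your proof is correct, but it takes a genuinely different route from the paper. The paper does not compute anything directly here: it sets $h_1=g_1g_2$ and $h_2=g_2$, observes that $h_1h_2=h_2h_1$ and $|h_1(b)|_b=|h_2(b)|_b=q\geq 2$, and then invokes Lemma \ref{le8} (the case $p=q$ of the nonsingular analysis), checking that each of its three alternatives collapses to condition (i) or (ii); that reduction inherits the machinery of Lemma \ref{ba} and the infinite words $\omega(h)$. You instead write out $g_1g_2(b)$ and $g_2g_1(b)$ explicitly, note that both contain exactly $q$ occurrences of $b$ so that equality amounts to matching the $q+1$ maximal $a$-blocks, and extract the interior equations $\gamma_1+\gamma_2+(s-1)\beta_i=0$, whose nonnegativity forces $\gamma_1=\gamma_2=0$ and then drives the split on $s=1$ versus $s\geq 2$. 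Your block decomposition is accurate (including the $\gamma_2+s\beta_i+\gamma_1$ interior exponents and the boundary exponents), the reduction of $g_1g_2=g_2g_1$ to the single equation on $b$ is justified since $g_1g_2(a)=a^{st}=g_2g_1(a)$ automatically, and your remark that Lemma \ref{ba} need not apply when $\gamma_2=0$ is a fair caveat, though moot for your argument. What each approach buys: the paper's version is shorter on the page and uniform with the surrounding sections, but leans on the heavier Lemma \ref{le8}; yours is self-contained, elementary, and makes visible exactly which exponent equations force the dichotomy, at the cost of a little bookkeeping.
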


\begin{proof}
 If (i) or (ii) holds, then $g_1g_2=g_2g_1$.

\medskip
Assume then that $g_1g_2=g_2g_1$. Let $h_1=g_1g_2$ and $h_2=g_2$. Then $h_1h_2=h_2h_1$. Since $|h_1(b)|_b=|h_2(b)|_b\geq 2$,
Lemma \ref{le8} implies that at least one of the following conditions holds:\smallskip

(a) $h_1=h_2$,

(b) $h_i(b)\in b^*$ for $i=1,2$,

(c) $h_1(a)=h_2(a)=a$ and the words $h_1(b)$ and $h_2(b)$ are $a$-conjugates.\medskip

Now (a) implies (i), (b) implies (ii) and (c) implies (i).
\end{proof}

\section{Commuting nonsingular morphisms $g_1$ and $g_2$ such that $|g_1(b)|_b=|g_2(b)|_b=1$}
Let $X=\{a,b\}$.
In this section we give all commuting nonsingular morphisms $g_i\in \Tri(X^*)$, $i=1,2$, such that $|g_1(b)|_b=|g_2(b)|_b=1$.

\begin{proposition}
Let $g_i\in \Tri(X^*)$, $i=1,2$, be morphisms such that
$$g_1(a)=a^s \hspace{3mm} \mbox{ and } \hspace{3mm} g_1(b)=a^{\gamma_1}ba^{\gamma_2}$$
and
$$g_2(a)=a^t \hspace{3mm} \mbox{ and } \hspace{3mm} g_2(b)=a^{\delta_1}ba^{\delta_2}$$
where $s,t\geq1$ and $\gamma_1,\gamma_2,\delta_1,\delta_2\geq 0$.
Then $g_1g_2=g_2g_1$ if and only if
$$(s-1)\delta_i=(t-1)\gamma_i \hspace{3mm} \mbox{ for } \hspace{3mm} i=1,2.$$
\end{proposition}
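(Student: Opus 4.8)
The plan is to prove this proposition by direct computation, since both $g_1(b)$ and $g_2(b)$ contain exactly one occurrence of $b$, making the composition explicit and tractable. The morphisms act on $a$ as pure powers, $g_1(a)=a^s$ and $g_2(a)=a^t$, so $g_1g_2(a)=a^{st}=g_2g_1(a)$ automatically; thus the equation $g_1g_2=g_2g_1$ reduces entirely to the condition $g_1g_2(b)=g_2g_1(b)$. The key observation is that whenever a morphism $g$ with $g(a)=a^k$ is applied to a word of the form $a^pba^q$, the result is $a^{kp}\,g(b)\,a^{kq}$, because the letters $a$ surrounding the single $b$ simply get multiplied by the exponent $k$.

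First I would compute $g_1g_2(b)$. We have $g_2(b)=a^{\delta_1}ba^{\delta_2}$, so applying $g_1$ and using $g_1(a)=a^s$ together with $g_1(b)=a^{\gamma_1}ba^{\gamma_2}$ gives
$$g_1g_2(b)=a^{s\delta_1}\,g_1(b)\,a^{s\delta_2}=a^{s\delta_1+\gamma_1}\,b\,a^{\gamma_2+s\delta_2}.$$
Symmetrically, applying $g_2$ to $g_1(b)=a^{\gamma_1}ba^{\gamma_2}$ and using $g_2(a)=a^t$, $g_2(b)=a^{\delta_1}ba^{\delta_2}$ yields
$$g_2g_1(b)=a^{t\gamma_1+\delta_1}\,b\,a^{\delta_2+t\gamma_2}.$$
Both results are words in $a^*ba^*$ with a single $b$, so they are equal if and only if the prefix exponents agree and the suffix exponents agree.

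Next I would equate the two exponents on each side. Equality of the $b$-prefix lengths gives $s\delta_1+\gamma_1=t\gamma_1+\delta_1$, which rearranges to $(s-1)\delta_1=(t-1)\gamma_1$; equality of the $b$-suffix lengths gives $\gamma_2+s\delta_2=\delta_2+t\gamma_2$, which rearranges to $(s-1)\delta_2=(t-1)\gamma_2$. These are exactly the two stated conditions for $i=1,2$. Conversely, if both conditions hold, reversing the algebra shows the prefix and suffix exponents match, so $g_1g_2(b)=g_2g_1(b)$, and combined with the automatic equality on $a$ we get $g_1g_2=g_2g_1$.

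I do not expect any genuine obstacle here: the single occurrence of $b$ keeps the composed images inside $a^*ba^*$, so the commutation collapses to two scalar equations in the exponents, and no case analysis, freeness argument, or appeal to automatic sequences is needed. The only point requiring mild care is to record correctly how $g_i$ multiplies the $a$-blocks flanking the $b$ (by $s$ versus by $t$ on the respective sides), since swapping these factors is precisely what produces the nonsymmetric coefficients $(s-1)$ and $(t-1)$ in the final conditions.
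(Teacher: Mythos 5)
Your computation is correct and is exactly the paper's own argument: compute $g_1g_2(b)=a^{s\delta_1+\gamma_1}ba^{s\delta_2+\gamma_2}$ and $g_2g_1(b)=a^{t\gamma_1+\delta_1}ba^{t\gamma_2+\delta_2}$, note that commutation on $a$ is automatic, and equate the two exponents to obtain $(s-1)\delta_i=(t-1)\gamma_i$ for $i=1,2$. No gaps; this matches the paper's proof in both substance and level of detail.
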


\begin{proof}
 We have
$$g_1g_2(b)=a^{s\delta_1+\gamma_1}ba^{s\delta_2+\gamma_2} \hspace{3mm} \mbox{and} \hspace{3mm} g_2g_1(b)=a^{t\gamma_1+\delta_1}ba^{t\gamma_2+\delta_2}.$$
Hence $g_1g_2=g_2g_1$ if and only if
$s\delta_1+\gamma_1=t\gamma_1+\delta_1$ and $s\delta_2+\gamma_2=t\gamma_2+\delta_2$.
This implies the claim.
\end{proof}

\section{Commuting morphisms $g_1$ and $g_2$ such that $g_1$ or $g_2$ is singular}
Let $X=\{a,b\}$ and let $h\in \Tri(X^*)$. If $h$ is singular, then $h(a)=\varepsilon$ or $h(b)\in a^*$.
In this section we give all commuting morphisms $g_i\in \Tri(X^*)$, $i=1,2$, such that $g_1$ or $g_2$ is singular.
\begin{proposition}
Let $g_i\in \Tri(X^*)$ for $i=1,2$. Assume that $g_1(b) \in a^*$. Then $g_1g_2=g_2g_1$ if and only if $|g_1g_2(b)|=|g_2g_1(b)|$  or, equivalently,
$$|g_1(a)||g_2(b)|_a+|g_1(b)||g_2(b)|_b=|g_2(a)||g_1(b)|.$$
\end{proposition}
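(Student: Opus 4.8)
The plan is to exploit the fact that $g_1$ maps both letters into $a^*$, so that both composite images $g_1g_2(b)$ and $g_2g_1(b)$ lie in $a^*$; for words over a single letter, equality is the same as equality of lengths, and this is what turns the commutation question into the stated numerical condition.

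First I would dispose of the letter $a$. Since $g_1,g_2\in\Tri(X^*)$, upper triangularity forces $g_1(a)=a^s$ and $g_2(a)=a^t$ with $s=|g_1(a)|$ and $t=|g_2(a)|$. Then $g_1g_2(a)=g_1(a^t)=a^{st}$ and $g_2g_1(a)=g_2(a^s)=a^{st}$, so the two composites always agree on $a$. Consequently $g_1g_2=g_2g_1$ holds if and only if $g_1g_2(b)=g_2g_1(b)$, and the whole problem reduces to comparing the images of $b$.

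Next I would compute these two images and observe that both lie in $a^*$. Since $g_1(a)=a^s\in a^*$ and, by hypothesis, $g_1(b)\in a^*$, every word in the image of $g_1$ lies in $a^*$; in particular $g_1g_2(b)\in a^*$, and counting letters gives $|g_1g_2(b)|=|g_1(a)|\,|g_2(b)|_a+|g_1(b)|\,|g_2(b)|_b$, because each of the $|g_2(b)|_a$ occurrences of $a$ in $g_2(b)$ contributes a factor $g_1(a)$ and each of the $|g_2(b)|_b$ occurrences of $b$ contributes a factor $g_1(b)$. On the other side, $g_1(b)\in a^*$ means $g_1(b)=a^{|g_1(b)|}$, so $g_2g_1(b)=g_2(a^{|g_1(b)|})=a^{|g_2(a)|\,|g_1(b)|}\in a^*$, of length $|g_2(a)|\,|g_1(b)|$.

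Finally, since $g_1g_2(b)$ and $g_2g_1(b)$ are both words over the single letter $a$, they are equal if and only if they have the same length. Combining this with the reduction of the first step yields $g_1g_2=g_2g_1$ if and only if $|g_1g_2(b)|=|g_2g_1(b)|$, and substituting the two length formulas produces the displayed equation. I expect no serious obstacle: the only point requiring genuine care is verifying that both composite images really lie in $a^*$, since this is precisely what collapses equality of words to equality of lengths, and it is exactly here that the hypothesis $g_1(b)\in a^*$ is used.
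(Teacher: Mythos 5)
Your proof is correct and follows essentially the same approach as the paper, which simply observes that both $g_1g_2(b)$ and $g_2g_1(b)$ lie in $a^*$ so that equality of words reduces to equality of lengths. You merely spell out the length computations and the trivial agreement on $a$ that the paper leaves implicit.
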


\begin{proof}
Since $g_2g_1(b)\in a^*$ and $g_1g_2(b)\in a^*$, the claim holds.
\end{proof}

\begin{proposition}
Let $g_1,g_2\in \Tri(X^*)$ be morphisms such that
$$g_1(a)=\varepsilon, \hspace{3mm} g_1(b)=u$$
and
$$g_2(a)=a^t, \hspace{3mm} g_2(b)=v,$$
where $t\geq 0$ and both $u$ and $v$ have at least one occurrence of $b$. Then $g_1g_2=g_2g_1$ if and only if at least one of the following conditions holds: \smallskip

(i) $g_1=g_2$,

(ii) $g_2(x)=x$ for all $x\in X$,

(iii) $t=0$ and $uv=vu$,

(iv) $g_i(b)\in b^*$ for $i=1,2$,
\eject
(v) $t=1$ and there exist nonnegative integers $\alpha,\beta,i$ and $j$ such that
$$g_1(b)=(a^{\alpha}ba^{\beta})^i, \hspace{3mm} g_2(b)=(ba^{\alpha+\beta})^jb.$$
\end{proposition}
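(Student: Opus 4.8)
The plan is to reduce the commutation equation to a single word equation and then split into cases according to the value of $t$ and the shape of $u$. Since $g_1(a)=\varepsilon$, both $g_1g_2(a)$ and $g_2g_1(a)$ equal $\varepsilon$, so $g_1g_2=g_2g_1$ is equivalent to $g_1g_2(b)=g_2g_1(b)$. Writing $p=|v|_b$ and $q=|u|_b$ (both at least one by hypothesis) and using $g_1(a)=\varepsilon$, $g_1(b)=u$, I compute $g_1g_2(b)=g_1(v)=u^{p}$ and $g_2g_1(b)=g_2(u)$, so the whole problem becomes the analysis of the single equation
$$g_2(u)=u^{p}.$$
The forward implications (that each of (i)--(v) yields commutation) are routine direct computations, along the lines of Examples~\ref{ex6} and~\ref{ex7}; the work is entirely in the converse.

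First I would dispose of the degenerate shapes. If $t=0$, then $g_2(u)=v^{q}$, so the equation reads $u^{p}=v^{q}$ with $p,q\ge 1$; since a nontrivial relation $u^{p}=v^{q}$ holds if and only if $u$ and $v$ are powers of a common word, this is equivalent to $uv=vu$, giving (iii) (which also covers (i)). If $t\ge 1$ and $u\in b^{*}$, say $u=b^{q}$, then $g_2(u)=v^{q}=u^{p}=b^{pq}$ forces $v\in b^{*}$, which is (iv). Finally, if $t\ge 1$, $u\notin b^{*}$ and $p=1$, the equation is $u=g_2(u)$; comparing lengths gives $|v|_a=0$, so $v=b$, and comparing the leading blocks of $a$'s forces $t=1$, whence $g_2$ is the identity and (ii) holds.

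The heart of the argument is the remaining case $t\ge1$, $u\notin b^{*}$ and $p\ge2$. Write $u=a^{d_0}ba^{d_1}\cdots ba^{d_q}$ and $v=a^{e_0}ba^{e_1}\cdots ba^{e_p}$. Comparing in $g_2(u)=u^{p}$ the block of $a$'s before the first $b$ and after the last $b$ gives $td_0+e_0=d_0$ and $td_q+e_p=d_q$; since $t\ge1$ this yields $e_0=e_p=0$ together with $(t-1)d_0=(t-1)d_q=0$. In particular $v$ begins and ends with $b$, so $g_2$ is prolongable on $b$ and $\omega(g_2)$ is defined. The key observation is that $u^{\omega}$ is a fixed point of $g_2$: indeed $g_2(u^{\omega})=g_2(u)^{\omega}=(u^{p})^{\omega}=u^{\omega}$. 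Stripping the leading $a^{d_0}$ produces an infinite word starting with $b$ that is still fixed by $g_2$ (here one uses $td_0=d_0$); by uniqueness of the fixed point of a prolongable morphism this word equals $\omega(g_2)$. Being a shift of the periodic word $u^{\omega}$, it is eventually periodic.

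I can then invoke Lemma~\ref{le2}: the eventual periodicity of $\omega(g_2)$ forces $e_0=e_p=0$ (already known) and $e_1=\cdots=e_{p-1}=\beta'$ for a single value $\beta'$, so $v=(ba^{\beta'})^{p-1}b$ and the gap sequence $A_{\omega(g_2)}$ is constant equal to $\beta'$. Comparing this with the gaps of the periodic word obtained from $u^{\omega}$, namely $d_1,\dots,d_{q-1},d_0+d_q$, I obtain $d_1=\cdots=d_{q-1}=\beta'$ and $d_0+d_q=\beta'$. Because $u\notin b^{*}$ we have $\beta'\ge 1$, so one of $d_0,d_q$ is positive and the relation $(t-1)d_0=(t-1)d_q=0$ forces $t=1$. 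Setting $\alpha=d_0$ and $\beta=d_q$, the computed shapes become $u=(a^{\alpha}ba^{\beta})^{q}$ and $v=(ba^{\alpha+\beta})^{p-1}b$, which is exactly (v) with $i=q$ and $j=p-1$. The main obstacle is precisely this middle step: identifying the shift of $u^{\omega}$ with $\omega(g_2)$ so that Lemma~\ref{le2} applies, and then converting the constancy of the gap sequence back into the periodic block structure of $u$; the rest is bookkeeping on the blocks of $a$'s.
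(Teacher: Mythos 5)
Your proof is correct and follows essentially the same route as the paper's: reduce commutation to the single equation $g_2(u)=u^{|v|_b}$, split off the cases $t=0$, $u\in b^*$ and $|v|_b=1$, and in the main case identify $\omega(g_2)$ with a shift of the periodic word $u^{\omega}$ so that Lemma~\ref{le2} pins down the shapes of $v$ and $u$. One small ordering quibble: the constancy of $A_{\omega(g_2)}$ does not follow from $v=(ba^{\beta'})^{p-1}b$ alone until you know $t=1$ (Lemma~\ref{le1}(ii) gives $A_{\omega(g_2)}(pi)=tA_{\omega(g_2)}(i)$), so you should first obtain $t=1$ from the fact that the gap sequence of the periodic word $a^{-d_0}u^{\omega}$ is bounded while Lemma~\ref{le1}(iii) would make it unbounded for $t\geq 2$ unless all $e_i$ vanish (which would force $u\in b^*$) --- exactly the step the paper compresses into ``hence $t=1$'' --- and only then conclude that the gap sequence is constantly $\beta'$.
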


\begin{proof}
If (i), (ii), (iii), (iv) or (v) holds, then  $g_1g_2=g_2g_1$ (see Examples \ref{ex1}, \ref{ex6}, \ref{ex7}).

\medskip
Assume that $g_1g_2=g_2g_1$. Then
$$g_2(u)=g_2g_1(b)=g_1g_2(b)=g_1(v)=u^{|v|_b}.$$
If $u\in b^*$, this equation implies that $v\in b^*$. Hence (iv) holds. Assume that $u\not\in b^*$.

\medskip
Next, assume that $t=0$. Then $u^{|v|_b}=g_2(u)=v^{|u|_b}$, which shows that (iii) holds.

Assume then that $t\neq 0$.
By assumption, $|v|_b=1$ or $|v|_b\geq 2$. Assume first that $|v|_b=1$. Then the equation $g_2(u)=u$ shows that (ii) holds. Assume finally that $|v|_b\geq 2$.
Then $\omega(g_2)$ is defined. Since $\omega(g_2)$ is obtained from $g_2^{\omega}(u)=u^{\omega}$ by deleting the occurrences of $a$ preceding the first occurrence of $b$, the word
$\omega(g_2)$ is eventually periodic. Hence $t=1$. By Lemma \ref{le2} there are nonnegative integers $\gamma$ and $j$ such that
$g_2(b)=(ba^{\gamma})^jb$. Since $\omega(g_2)=(ba^{\gamma})^{\omega}$, there exist nonnegative integers $\alpha,\beta$ and $i$ such that
$g_1(b)=(a^{\alpha}ba^{\beta})^i$ and $\gamma=\alpha+\beta$. Hence (v) holds.
\end{proof}

For a systematic study of the equation $h(w)=w^n$, $n\geq 2$, for binary morphisms, see \cite{HL}.

\end{document}